\newtheorem{theorem}{Theorem}[section]
\newtheorem{lemma}[theorem]{Lemma}
\newtheorem{proposition}[theorem]{Proposition}
\newtheorem{defin}[theorem]{Definition}
\newenvironment{proof}{\noindent \textbf{Proof: }}{\hfill
$\Box$  \vspace{1ex}}
\newenvironment{definition}{\begin{defin}\em}{\end{defin}}
\newtheorem{defins}[theorem]{Definitions}
\newtheorem{exs}[theorem]{Examples}
\newtheorem{ex}[theorem]{Example}
\newtheorem{rem}[theorem]{Remark}
\newenvironment{remark}{\begin{rem}\em}{\end{rem}}
\newtheorem{rems}[theorem]{Remarks}
\newtheorem{corollary}[theorem]{Corollary}
\def\fq{\mathbb{F}_q}
\def\B{\mathbf{B}}
\def\lm{\textrm{LM}}
\def\lt{\textrm{LT}}
\def\C{\mathcal{C}}
\def\fqt{\mathbb{F}_q[\boldsymbol{t}]}
\def\L{\mathcal{L}}
\def\M{\mathcal{M}}
\def\e{\boldsymbol{e}}
\begin{document}

\begin{center}
{\Large\textbf{Next-to-minimal weight of toric codes defined over 
hypersimplices}}
\end{center}
\vspace{3ex}

\noindent\begin{center} 
\textsc{C\'{\i}cero  Carvalho\footnote[1]{Instituto de Matem\'atica e 
Estat\'{\i}stica - UFU - Brazil -- cicero@ufu.br} and Nupur 
Patanker\footnote[2]{Indian Institute of Science Bangalore - India -- 
nupurp@iisc.ac.in \\ \hspace*{3.5ex} \textbf{To appear in the Journal of 
Algebra 
and Its 
Applications.}}}
\end{center}
\vspace{3ex}

\begin{center}
\em{To Sudhir Ghorpade, on the occasion of his 60th birthday.}
\end{center}

\vspace{4ex}
\noindent
\textbf{Abstract.}
Toric codes are a type of evaluation codes introduced by J.P.\ Hansen in 2000. 
They are produced by evaluating (a vector space composed by) polynomials at the 
points of 
$(\fq^*)^s$,  the monomials of these polynomials being related to a certain 
polytope. Toric codes related to hypersimplices are the result of the 
evaluation of a vector space of square-free homogeneous polynomials of degree 
$d$.
The dimension and minimum distance of toric codes related to hypersimplices 
have been determined by 
Jaramillo et al.\ in 2021. 
The next-to-minimal weight in the case $d = 1$ has been determined by 
Jaramillo-Velez et al.\ in 2023.
In this work we use tools from Gr\"obner basis theory to determine the 
next-to-minimal weight of these codes for $d$ such that $3 \leq d \leq 
\frac{s - 2}{2}$ or $\frac{s + 2}{2} \leq d < s$.
\vspace{3ex}

\noindent
{\small
\textbf{Keywords.} Evaluation codes; toric codes; next-to-minimal weight; 
second least Hamming weight.}

\vspace{1ex}
\noindent
{\small
\textbf{MSC.} 94B05, 11T71, 14G50
    }

\vspace{3ex}
\section{Introduction}

Let $\fq$ be a finite field with $q$ elements. A (linear) code $\C$ of length 
$m$ is an $\fq$-vector subspace of $\fq^m$. Other important parameters of $\C$ 
are its dimension and its minimum distance
\[
\delta(\C) := \min \{ \omega(v) \mid v \in \C, v \neq 0\}
\]
where $\omega(v)$ is the Hamming weight of $v$, namely the number of nonzero 
entries of 
$v$. 

In this paper we will work with an instance of the so-called evaluation codes. 
Let $X \subset \fq^s$, and let $I_X \subset \fq[t_1, \ldots, t_s] =: \fqt$ be 
the set 
of all polynomials which vanish at all points of $X$. For any $P \in X$ one 
may find a polynomial $f \in \fqt$ such that $f(P) = 1$ and $f(Q) = 0$ for all 
$Q \neq P$, hence it is not difficult to check that, writing $n := | X |$,  the 
evaluation map
\begin{equation}\label{varphi}
\begin{array}{rcl}
\varphi : \fqt/I_X & \longrightarrow & \fq^n \\
f + I_X & \longmapsto & (f(P_1), \ldots, f(P_n))
\end{array}
\end{equation}
is an isomorphism of $\fq$-vector spaces.
Given a subspace $\L \subset \fqt/I_X$ the image $\varphi(\L)$ is the 
evaluation code associated to $X$ and $\L$. Fitzgerald and Lax (see \cite[Prop. 
1]{fl}) proved that every linear code may be realized as an evaluation code, 
for 
appropriate $X$ and $\L$.

In this work we take $X := (\mathbb{F}_q^*)^s$, so that $I_X = (t_1^{q - 1} - 
1,\ldots, t_s^{q - 1} - 1)$ and $n = (q-1)^s$.  Let $R \subset \mathbb{R}^s$ be 
a 
lattice polytope, 
i.e.\ the convex hull of a finite set of points with integral coordinates, and 
assume that $R \subset [0, q-2]^s$. Let $\L \subset \fqt/I_X$ be the 
$\fq$-vector subspace generated by $\{ t_1^{\alpha_1} \cdots t_s^{\alpha_s} 
+ I_X 
\mid (\alpha_1, \ldots, \alpha_s) \in R \cap \mathbb{Z}^s \}$.
The toric code associated to $X$ and $R$ is $\varphi(\L)$ (see \cite{hansen}).
We study the case where $R$ is a hypersimplex of $\mathbb{R}^s$.

\begin{definition}
Let $d$ be a positive integer such that $d \leq s$,  and let $R(d)$ be the 
convex hull of the set 
$\{\e_{i_1} + \cdots + \e_{i_d} \mid 1 \leq i_1 < \cdots < i_d \leq s \}$, 
where $\e_i$ denotes the $i$-th vector in the canonical basis for 
$\mathbb{R}^s$, $1 \leq i \leq s$. 
Let $\L(d) \subset \fqt/I_X$ be the 
$\fq$-vector subspace generated by $\{ t_1^{\alpha_1} \cdots t_s^{\alpha_s} 
+ I_X 
\mid (\alpha_1, \ldots, \alpha_s) \in R(d) \cap \mathbb{Z}^s \}$. The toric 
code $\varphi(\L(d))$ associated to $X$ and $R(d)$ will be denoted by $\C(d)$.
\end{definition}

In the case of a hypersimplex $R(d)$ we have 
\[
R(d) \cap \mathbb{Z}^s  = \{\e_{i_1} + \cdots + \e_{i_d} \mid 1 \leq i_1 < 
\cdots < i_d \leq s \}
\] 
so $\L(d)$ is  $\fq$-vector space generated by the classes 
$X_1^{\alpha_1} \cdots X_s^{\alpha_s} + I_X$ where $\alpha_i \in \{0,1\}$ 
for all $i = 1, \ldots, s$ and $\sum_{i = 1}^s \alpha_i = d$.  The minimum 
distance of $\C(d)$ was determined in \cite{evalcodes}. 
The second least Hamming weight, 
also known as next-to-minimal 
weight, of $\C(d)$
was determined for $d = 1$ in \cite{jaramillo2023}.
In this work we 
determine the next-to-minimal 
weight of $\C(d)$ for $d$ such that 
$3 \leq d \leq \frac{s -2}{2}$ or $\frac{s+2}{2}  
\leq d < s$. We also characterize the 
classes of polynomials whose evaluation produces the minimum weight vectors of 
$\C(d)$.

The paper is organized as follows. In the next section we present several 
results from Gr\"obner bases theory which will be used in Sections 3 and 4.  
Since the last two sections contain many calculations, we start Section 3 with 
a description of the procedure we used to determine the next-to-minimal weight 
of $\C(d)$, so the reader may follow the ensuing calculations knowing why they 
are needed and where they will be used. Then we present preliminary  results
which will be used in Section 4, together with a characterization of 
polynomials that produce the minimum weight codewords. The last section brings 
all the previous results together to determine the value of the next-to-minimal 
weights of $\C(d)$, when $3 \leq d \leq \frac{s -2}{2}$ or $\frac{s+2}{2}  
\leq d < s$.

\section{Tools from Gr\"obner bases theory}

Let $\M$ be the set of monomials in the ring $\fqt$, and let $\prec$ be a 
monomial order in $\M$. Let $f \in \fqt$, $f \neq 0$, 
the greatest monomial which appears in $f$ is called the leading monomial of 
$f$ and is denoted by $\lm(f)$. 	 

\begin{definition}
Let $J \subset \fqt$ be an ideal. A set $\{g_1, \ldots, g_k\} \subset J$ is a 
Gr\"obner basis for $J$ w.r.t. $\prec$ if the leading monomial of any nonzero  
$f \in J$ is a multiple of $\lm(g_i)$ for some $i \in \{1, \ldots, k\}$.
\end{definition}

This concept was introduced by Bruno Buchberger in his Ph.D. thesis (see 
\cite{bruno}). 
There he proved that every nonzero ideal $J$ admits a Gr\"obner basis w.r.t.\ a 
fixed monomial order, and presented an algorithm that produces a Gr\"obner 
basis for $J$ starting from a generating set. In what follows we assume that 
the reader is familiar with Buchberger's algorithm, including the concepts of 
$S$-polynomial of $f, g \in \fqt$, 
denoted by $S(f,g)$ and of division of a polynomial by a set of polynomials 
(see e.g.\ \cite{gb-in-coding} for a concise 
presentation of these concepts, or e.g.\ \cite{cox} for a full treatment).

\begin{definition}
Let $J \subset \fqt$ be a nonzero ideal. The footprint of $J$ (w.r.t. $\prec$) 
is the set
\[
\Delta(J) := \{ M \in \M \mid M \textrm{ is not the leading monomial of any 
polynomial in  } J \}.
\]
\end{definition}

From the definitions above one may easily prove the following result.

\begin{lemma} If $\{g_1, \ldots, g_k\}$ \label{gb-fp}
is a Gr\"obner basis for $J$ then 
\[
\Delta(J) = \{ M \in \M \mid M \textrm{ is not a multiple of } \lm(g_i), i = 
1,\ldots k\}.
\]
\end{lemma}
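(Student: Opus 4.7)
The plan is to establish the stated equality by checking both inclusions directly from the definitions. The inclusion $\Delta(J) \subseteq \{M \in \M \mid \lm(g_i) \nmid M \text{ for all } i\}$ uses only that each $g_i$ belongs to the ideal $J$, while the reverse inclusion uses the Gr\"obner basis hypothesis in an essential way.

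For the first inclusion I would argue by contrapositive: suppose $M \in \M$ is a multiple of some $\lm(g_i)$, say $M = N \, \lm(g_i)$ with $N \in \M$. Then $N g_i$ is a nonzero element of $J$, and since multiplication by a single monomial $N$ causes no cancellation in $\fqt$, its leading monomial equals $N \, \lm(g_i) = M$. Thus $M$ is the leading monomial of a polynomial in $J$, so $M \notin \Delta(J)$. Conversely, if $M \notin \Delta(J)$, then some nonzero $f \in J$ satisfies $\lm(f) = M$; applying the defining property of a Gr\"obner basis to this $f$ yields an index $i$ with $\lm(g_i) \mid \lm(f) = M$, so $M$ is not in the right-hand set either.

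There is essentially no obstacle in this argument; the lemma is a direct reformulation of the defining property of a Gr\"obner basis in terms of the footprint. The only point worth flagging is the elementary observation $\lm(N g_i) = N \, \lm(g_i)$ for a monomial $N$, which makes the first inclusion transparent and explains why the reverse inclusion is precisely where the Gr\"obner basis hypothesis is needed: without it one would only know that the leading monomials of the $g_i$ lie among those of $J$, not that they generate all of them.
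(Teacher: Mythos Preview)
Your argument is correct and is exactly the kind of direct verification from the definitions that the paper has in mind; indeed the paper does not spell out a proof, remarking only that the lemma follows easily from the definitions of Gr\"obner basis and footprint. Your two contrapositive inclusions are precisely that verification, so there is nothing to add or correct.
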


An important result in Buchberger's thesis states that for a nonzero ideal $J 
\subset \fqt$ the set 
\[
\{ M + J \mid M \in \Delta(J) \}
\]
is a basis for $\fqt/J$ as an $\fq$-vector space.
Thus,  denoting by $\langle \Delta(J) \rangle$ the $\fq$-vector subspace of 
$\fqt$ generated by the monomials in $\Delta(J)$, for each $f \in \fqt$ there 
exists a unique $f^* \in \langle \Delta(J) \rangle$ such that $f + J = f^* + J$.

This result may be used to study the parameters of evaluation codes.
We start by noting that since the map \eqref{varphi} is an isomorphism we get 
$|\Delta(I_X)| = \dim_{\fq}(\fqt/I_X) = n = |X|$. 

\begin{proposition}\label{bound-weight}
Let $X \subset \fq^s$, and let $I_X \subset \fqt$ be 
the set 
of all polynomials which vanish at all points of $X$. Let $f \in \fqt$, $f 
\notin I_X$, 
then the weight of $\varphi(f + I_X)$ satisfies 
\[
\omega(\varphi(f + I_X)) \geq 
| \Delta(I_X) | -  |\Delta(I_X + (f))|. 
\]
\end{proposition}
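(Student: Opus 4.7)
The plan is to translate the weight into a count of zeros and then bound that count using the footprint of a bigger ideal.

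First I would rewrite $\omega(\varphi(f+I_X))$ as $|X| - |Y|$, where $Y := \{P \in X \mid f(P) = 0\}$ is the vanishing set of $f$ on $X$. Since the evaluation map $\varphi$ of \eqref{varphi} is an isomorphism we already know $|X| = n = |\Delta(I_X)|$, so the task reduces to proving the upper bound $|Y| \leq |\Delta(I_X + (f))|$.

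Next I would note that the same Lagrange-interpolation argument given at the start of the paper (the one used to justify that $\varphi$ is an $\fq$-linear isomorphism) applies verbatim to the finite set $Y \subset \fq^s$, since one can interpolate any function on a finite subset of $\fq^s$ by a polynomial. Hence $\dim_{\fq}(\fqt/I_Y) = |Y|$, and by Buchberger's theorem (recalled in the paragraph after Lemma~\ref{gb-fp}) this dimension equals $|\Delta(I_Y)|$. Thus $|Y| = |\Delta(I_Y)|$.

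Then I would compare the two ideals. By construction every element of $I_X$ vanishes on $Y \subseteq X$, and $f$ vanishes on $Y$, so $I_X + (f) \subseteq I_Y$. Whenever an ideal gets larger, its set of leading monomials can only grow, so the footprint shrinks: $\Delta(I_Y) \subseteq \Delta(I_X + (f))$. In particular $|\Delta(I_Y)| \leq |\Delta(I_X + (f))|$. Stringing the equalities and the inequality together gives
\[
\omega(\varphi(f+I_X)) = |\Delta(I_X)| - |Y| = |\Delta(I_X)| - |\Delta(I_Y)| \geq |\Delta(I_X)| - |\Delta(I_X + (f))|,
\]
which is the claim. There is no real obstacle here; the only subtlety worth stating explicitly is the monotonicity $J_1 \subseteq J_2 \Rightarrow \Delta(J_2) \subseteq \Delta(J_1)$, which is immediate from the definition of the footprint.
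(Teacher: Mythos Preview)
Your proof is correct and follows essentially the same approach as the paper: both reduce to bounding the number of zeros $|Y|$ by $|\Delta(I_X+(f))|$ via the interpolation/evaluation argument. The only cosmetic difference is that the paper works directly with the surjective evaluation map $\fqt/(I_X+(f)) \to \fq^{|Y|}$, whereas you pass through the intermediate ideal $I_Y$ and invoke the monotonicity $\Delta(I_Y)\subseteq\Delta(I_X+(f))$; these are two phrasings of the same inequality of dimensions.
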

\begin{proof}
If $f(Q) \neq 0$ for all $Q \in X$ then $\omega(\varphi(f + I_X)) = |X| = | 
\Delta(I_X)|$ and the inequality holds trivially. So we assume that there exist 
a total of 
$N 
\geq 1$ points of $X$, namely 
$\{Q_1, \ldots, Q_{N}\}$, which are zeros of $f$.
The evaluation map $\psi: \fqt/(I_X + (f)) \rightarrow \fq^{N}$
defined by $\psi(g + (I_X + (f))) = (g(Q_1), \ldots, g(Q_N))$, 
is a linear 
transformation which is surjective (for the same reason that $\varphi$ is 
surjective), so from Buchberger's thesis result we get $N \leq |\Delta(I_X + 
(f))|$.
Hence
\[
\omega(\varphi(f + I_X)) = n - N \geq 
 | \Delta(I_X) | -  |\Delta(I_X + (f))|.
\]  
%
%
\end{proof}

The results mentioned above have been used to determine the minimum distance,  
as well as the next-to-minimal weights, of several types of codes (see e.g.
\cite{geil}, \cite{geil2}, \cite{rolland}, \cite{car-2013}, 
\cite{car-neu-2017}, \cite{car-2024} and the references therein). In the next 
sections we will use them to determine the next-to-minimal weights of 
toric codes defined over  hypersimplices.

\section{Preliminary results}

From now on we work with the graded lexicographic order on the set $\M$ of 
monomials of $\fqt$, where $t_s \prec \cdots \prec t_1$. 

Let $X := (\mathbb{F}_q^*)^s$, so that $I_X = (t_1^{q - 1} - 
1,\ldots, t_s^{q - 1} - 1)$. Since the leading monomials of any two distinct 
generators of $I_X$ are coprime, we get (see \cite[p. 103--104]{cox}) that 
$\{t_1^{q - 1} - 1,\ldots, t_s^{q - 1} - 1\}$ is a Gr\"obner basis for $I_X$, 
and from Lemma \ref{gb-fp} we get that
\[
\Delta(I_X) = \left\{ \prod_{i = 1}^s t_i^{\alpha_i} \in \M \mid 0 \leq 
\alpha_i 
\leq 
q - 2 \; \forall \; i = 1, \ldots, s\right\}.
\]

Let $d$ be an integer such that $3 \leq d < s$ and assume that $q \geq 4$. Let 
$\L(d)$ be the $\fq$-vector subspace of $\fqt/I_X$ generated by 
\[
\{ t_1^{\alpha_1} \cdots t_s^{\alpha_s} + I_X \mid \sum_{i = 1}^s \alpha_i 
= d, \, \alpha_i \in \{0, 1\} \; \forall \; i = 1, \ldots, s\},
\]
and let $\C(d) = \varphi(\L(d))$. From the fact that $\varphi$ is an 
isomorphism and that the monomials whose classes generate $\L(d)$ are in 
$\Delta(I_X)$ we get that $\dim_{\fq}\L(d) = \binom{s}{d}$. The dimension and 
the minimum 
distance $\delta(\C(d))$ were determined in \cite{evalcodes}:
\[
\delta(\C(d)) = \left\{ \begin{array}{l} (q - 2)^d (q - 1)^{s - d} \textrm{ if 
} 3 \leq d \leq \frac{s}{2} \; ;\\
                                       (q - 2)^{s -d} (q - 1)^{d} \textrm{ if } 
                                       \frac{s}{2} < d < s .
                         \end{array} 
                 \right.        
\]

Before we start a series of computations that will determine the 
next-to-minimal weight of $\C(d)$ for most values of $d$, we describe the 
general idea of the procedures that we will use. Let $f$ be a nonzero 
polynomial which
is an $\fq$-linear combination of monomials in the set
\[
\left\{ t_1^{\alpha_1} \cdots t_s^{\alpha_s}  \mid \sum_{i = 1}^s \alpha_i 
= d, \, \alpha_i \in \{0, 1\} \; \forall \; i = 1, \ldots, s \right\},
\]
and let 
$\B := \{t_i^{q - 1} - 1 \mid i = 1, \ldots, s\} \cup \{ 
f\}$. In what follows, we will show that, in the case where $2d \leq s$,  if 
$\B$ is a Gr\"obner basis for the ideal $I_X + (f)$ it defines, then 
$\varphi(f  + I_X)$ is 
a minimum weight codeword (see Corollary \ref{rem-eq-zero} and Proposition 
\ref{min-word}). 
Thus, for $\varphi(f  + I_X)$ to have a weight greater than the minimum 
distance, the $S$-polynomial $S(t_j^{q - 1} - 1, f)$, for some $j \in 
\{1,\ldots, s\}$, must have a nonzero remainder $z$ in the division by $\B$. 
In the first Theorem of this section we prove that the leading monomial of $z$
can be of four distinct types. 
From the division algorithm we know that $\lm(z)$ is not a multiple of $t_i^{q 
- 1}$, for all $i = 1, \ldots, s$ (i.e. $\lm(z) \in \Delta(I_X)$) and also not 
a 
multiple of $\lm(f)$.
Since $\B \cup \{z\}$ is (also) a generating set for $I_X + (f)$, from the 
definition of footprint we get that
\begin{equation*} 
\begin{split}
\Delta(I_X + (f)) &\subset \{ M \in \M \mid  t_i^{q - 1} \nmid M \; \forall \; 
i 
= 1, \ldots, s; \lm(f)\nmid M , \lm(z) \nmid M \}  \\
&= \Delta(I_X) \setminus \big(\{M \in \M \mid M \textrm{ is a multiple of } 
\lm(f) 
\} \\
& \textrm{\hspace{20ex}} \cup \{M \in \M \mid 
M \textrm{ is a multiple of }  \lm(z)\}\big).
\end{split}
\end{equation*}
Thus 
\begin{equation*} 
\begin{split}
| \Delta&(I_X + (f)) | \leq |\Delta(I_X)| - |\{M \in \Delta(I_X) \mid  
 M \textrm{ is a multiple of } \lm(f)  \}| \\
& -  |\{M \in \Delta(I_X) \mid M 
\textrm{ is a multiple of } \lm(z) \textrm{ and not a multiple of } \lm(f) \}|.
\end{split}
\end{equation*}
 Note that $\lm(f) = t_{i_1} \cdots t_{i_d}$ for some $1 \leq i_1 < \cdots 
 < i_d \leq s$ so 
\[
|\{M \in \Delta(I_X) \mid  M \textrm{ is a multiple of } \lm(f)  \}| = 
(q - 2)^d (q - 1)^{s - d}
\] 
and from Lemma \ref{bound-weight} we get 
\begin{equation}\label{bound-2nd}
 \begin{split}  
\omega(\varphi(f + I_X)) &\geq 
| \Delta(I_X) | - | \Delta(I_X + (f))| \\
&\geq  (q - 2)^d (q - 1)^{s - d} \\
&+ |\{M \in \Delta(I_X) \mid  M 
\textrm{ is a multiple of } \lm(z) \\
&\textrm{\hspace{25ex}} \textrm{ and not a multiple of } \lm(f) \}|.
\end{split}
\end{equation}
In the next section, for each of the four types of $\lm(z)$ we determine 
the number 
\begin{equation*} 
\begin{split}
N(\lm(z)) := 
|\{M \in \Delta(I_X) 	&\mid  M 
\textrm{ is a multiple of } \lm(z) \\
 &\textrm{\hspace{10ex}} \textrm{ and not a multiple of } \lm(f) \}|
\end{split}
\end{equation*}
(see Lemma \ref{num_mon}). 
We prove that the lowest possible value is 
$(q - 3)(q - 2)^{d - 2}(q - 1)^{s - d - 1}$, and the second lowest is
$(q - 2)^{d}(q - 1)^{s - d - 2}$ (see Proposition \ref{nm4-minimal}).
Thus, when $z$ has 
a leading monomial (say $M^*$)
of the type that has the lowest value for  $N(\lm(z))$
 we have
\[
\omega(\varphi(f + I_X)) \geq
(q - 2)^d (q - 1)^{s - d} + (q - 3)(q - 2)^{d - 2}(q - 1)^{s - d - 1}.
\]
Yet this lower bound is not attained because we 
prove (see Proposition \ref{m3}) that, when $\lm(z) = M^*$, 
there exists  $z' \in I_X + (f)$ with a leading monomial $M^{**}$ which is 
in 
$\Delta(I_X)$ and is not a multiple of $\lm(f)$ and not a multiple of $M^*$. 
Thus $\B \cup \{z, z'\}$ is a generating set for $I_X + (f)$ and reasoning as 
above  we get	 
\begin{equation*} 
\begin{split}
\omega(\varphi(f + I_X)) &\geq 
| \Delta(I_X) \setminus \Delta(I_X + (f))| = 
| \Delta(I_X) | - | \Delta(I_X + (f))| \\
&\geq  (q - 2)^d (q - 1)^{s - d} + (q - 3)(q - 2)^{d - 2}(q - 1)^{s - d - 1}\\
&+ |\{M \in \Delta(I_X) \mid  M 
\textrm{ is a multiple of } M^{**} \\
&\textrm{\hspace{3ex}} \textrm{ and not a multiple of } \lm(f) \textrm{ and not 
a multiple of } M^{*}\}|.
\end{split}
\end{equation*}
We also show in Proposition \ref{m3} that the above lower bound is greater than 
$(q - 2)^d (q - 1)^{s - d} + (q - 2)^{d}(q - 1)^{s - d - 2}$, a lower bound 
that we prove is realized, in the case $d \leq 
\frac{s -2}{2}$, when $\lm(z)$ is of the type for which $N(\lm(z))$ 
has the second lowest possible value (see Theorem \ref{main}). This concludes 
the determination of the next-to-minimal weight in the case where $d \leq 
\frac{s-2}{2}$, and in the last result we show how to obtain the 
next-to-minimal weight in the case where $\frac{s+2}{2}  \leq  d < s$.

Now we start the computations described in the above procedure.
In this paper polynomials whose monomials are square-free are called square-free
polynomials.
Recall that we are assuming $q \geq 4$ and $d \geq 3$, and  
recall also that $f$ is a 
$\fq$-linear combination of monomials of the type 
$t_1^{\alpha_1} \cdots t_s^{\alpha_s}$
with $\alpha_i \in \{0, 1\}$ for all $i = 1, \ldots, s$
and $\sum_{i = 1}^s \alpha_i = d$, or more simply,
a homogeneous square-free polynomial of degree $d$ which we may assume is monic.
To simplify the 
notation, after a  relabeling  
of the variables we assume, from now on, that $\lm(f) = t_1\cdots t_d$. 
Firstly we 
determine the possibilities of leading monomial for the remainder in the 
division of $S(t_j^{q - 1} - 1, f)$ by the set $\B$, for $j \in \{1, \ldots, 
d\}$. 
As described 
above, we always work with monomials which belong to the footprint of $I_X$. 
So, 
in what follows, when we say we are going to count the number of monomials 
which are multiple (or are not multiple) of a monomial $M$, it's to be 
understood that we are counting the number of monomials in $\Delta(I_X)$ 
that are multiple (or are not multiple) of $M$.
When we write $\widehat{t_i}$ for a variable in a product, we mean that the 
factor $t_i$ does not 
appear in the product, where $i \in \{1, \ldots, s\}$.

\begin{theorem}\label{rem-mon} Let $f \in \fqt$ be a
 homogeneous square-free monic polynomial of degree $d$
and let $j\in \{1, \ldots, d\}$. The 
remainder of 
$S(t_j^{q - 1} - 1, 
f)$ in the division by $\B = \{t_i^{q - 1} - 1 \mid i = 1, \ldots, s\} \cup \{ 
f\}$ 
is either zero, or has as leading monomial:\\
a) the monomial $t_1 \cdots \widehat{t_j} \cdots
t_d$; \\
b) a monomial of degree $q - 2 + d$, of the form $t_{j}^{q - 2} M_d$ or of the 
form $t_{j}^{q - 2} t_{e}^2  M_{d -2}$, with $e \in \{d+1, \ldots, s\}$, $t_e 
\nmid M_{d - 2}$, and 
where for $i = d-2, d$ the monomial 
$M_i$ is  square-free 
of degree $i$, and is not a multiple of either $t_j$ or $t_{\ell}$, for some 
$\ell 
\in \{1, \ldots, 
d\} \setminus \{j\}$; \\
c) a monomial of degree $q - 2 + d$ of the form $t_1 \cdots \widehat{t_j}  
\cdots  t_d t_{e_1}^{q - 2} t_{e_2}$, where $e_1$ and $e_2$ are distinct and 
are 
in the set $\{d+1, \ldots, s\}$.
\end{theorem}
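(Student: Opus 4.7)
The plan is to compute $S(t_j^{q-1}-1, f)$ explicitly and then run the division algorithm against $\B$, classifying all possible leading monomials of the remainder. Let $\tilde{f} := f - t_1\cdots t_d$, so $\tilde{f}$ is an $\fq$-linear combination of square-free degree-$d$ monomials distinct from $\lm(f)=t_1\cdots t_d$. A short calculation, using that the LCM of $t_j^{q-1}$ and $t_1\cdots t_d$ is $t_j^{q-1}\cdot t_1\cdots\widehat{t_j}\cdots t_d$, gives
\[
S(t_j^{q-1}-1, f) \;=\; -\,t_1\cdots\widehat{t_j}\cdots t_d \;-\; t_j^{q-2}\,\tilde{f}.
\]
The single degree-$(d-1)$ term $t_1\cdots\widehat{t_j}\cdots t_d$ is irreducible modulo $\B$ (its degree is below $\deg\lm(f)$ and it is not divisible by any $t_i^{q-1}$); every other term is of the form $t_j^{q-2}M_d$ of degree $q-2+d$, where $M_d$ is a square-free degree-$d$ monomial appearing in $\tilde{f}$.

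Next I would analyze each term $t_j^{q-2}M_d$ under division by $\B$, in three cases. If $t_j\mid M_d$, then $t_j^{q-2}M_d\equiv M_d/t_j\pmod{t_j^{q-1}-1}$, producing an irreducible square-free monomial of degree $d-1$. If $t_j\nmid M_d$ and $\lm(f)\nmid t_j^{q-2}M_d$, the term is already irreducible modulo $\B$; the failure of $\lm(f)$ to divide it forces $M_d$ to miss some $t_\ell$ with $\ell\in\{1,\ldots,d\}\setminus\{j\}$, matching the first form of case~(b). Finally, if $t_j\nmid M_d$ but $\lm(f)\mid t_j^{q-2}M_d$, a degree count forces $M_d = t_1\cdots\widehat{t_j}\cdots t_d\cdot t_e$ for some $e\in\{d+1,\ldots,s\}$, and division by $f$ (with quotient $t_j^{q-3}t_e$) replaces the term by $-t_j^{q-3}t_e\,\tilde{f}$, generating new degree-$(q-2+d)$ monomials to be handled recursively.

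Iterating on each new monomial $t_j^{q-3}t_e M'_d$, I would sub-case according to the divisibility of $M'_d$ by $t_j$ and $t_e$. The subcase $t_j\mid M'_d$ and $t_e\mid M'_d$ produces the irreducible $t_j^{q-2}t_e^2 M_{d-2}$ with $M_{d-2}$ a square-free degree-$(d-2)$ monomial, and a simple degree count shows $M_{d-2}$ must miss some $t_\ell\in\{1,\ldots,d\}\setminus\{j\}$, giving the second form of case~(b). The other subcases either feed back into the three cases above or perpetuate the recursion. To reach case~(c), I would track the self-loop chain in which each successive reduction by $f$ picks from $\tilde{f}$ another term of the reducible shape $t_1\cdots\widehat{t_j}\cdots t_d\cdot t_{e'}$: after $q-2$ rounds the $t_j$-exponent is exhausted, leaving a term $T\cdot t_1\cdots\widehat{t_j}\cdots t_d$ where $T$ is a degree-$(q-1)$ monomial in $\{t_{d+1},\ldots,t_s\}$. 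If some variable of $T$ reaches exponent $q-1$, a final reduction by $t_e^{q-1}-1$ collapses the term; otherwise the surviving irreducible of largest leading monomial has the shape $t_{e_1}^{q-2}t_{e_2}\cdot t_1\cdots\widehat{t_j}\cdots t_d$, which is exactly case~(c).

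The main obstacle is the recursive bookkeeping: showing that the case-(iii) chain cannot produce leading monomials outside the enumerated types and that every branch terminates. A clean intermediate step would be an irreducibility lemma verifying that the candidate leading monomials in~(b) and~(c) are not multiples of any $t_i^{q-1}$ nor of $\lm(f)$---which for~(c) follows because $t_j$ is absent, and for~(b) follows because $t_j$ appears exactly to the power $q-2$ while some $t_\ell$ with $\ell\in\{1,\ldots,d\}\setminus\{j\}$ is missing---thereby isolating the purely combinatorial part of the argument.
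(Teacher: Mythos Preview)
Your initial computation of $S(t_j^{q-1}-1,f)=-t_1\cdots\widehat{t_j}\cdots t_d-t_j^{q-2}\tilde f$ and the three-way split of the monomials $t_j^{q-2}M_d$ according to divisibility by $t_j$ and by $\lm(f)$ is exactly how the paper begins; the paper encodes the same split by writing $f=t_1\cdots t_d+t_jh_{d-1}+(t_1\cdots\widehat{t_j}\cdots t_d)h_1+h_d$, so your cases (i), (ii), (iii) correspond to the pieces $t_jh_{d-1}$, $h_d$, $(t_1\cdots\widehat{t_j}\cdots t_d)h_1$ respectively. The divergence is in the iteration: the paper carries the \emph{whole} polynomial through the division, obtaining after $q-2$ rounds a closed form
\[
S_{q-1}=(t_1\cdots\widehat{t_j}\cdots t_d)(h_1^{q-1}-1)+h_{d-1}(-1+t_j^{q-2}h_1-\cdots)+h_d(-t_j^{q-2}+t_j^{q-3}h_1-\cdots),
\]
from which the leading monomial is read off directly; you instead follow each monomial through a branching recursion.

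The gap is that your recursion genuinely produces irreducible monomials that are of \emph{none} of the types (a), (b), (c), and you never argue they cannot be leading. Concretely, in your case-(iii) iteration the subcase $t_j\nmid M'_d$, $t_e\nmid M'_d$, $t_1\cdots\widehat{t_j}\cdots t_d\nmid M'_d$ gives $t_j^{q-3}t_eM'_d$, which is irreducible modulo $\B$ (the $t_j$-exponent is $q-3$ and some $t_\ell$ with $\ell\le d$ is missing) yet matches no listed shape; these are exactly the monomials of $t_j^{q-3}h_1h_d$ in the paper's closed form, and deeper in the recursion you get the analogous $t_j^{q-2-k}$ terms for all $k\ge 1$. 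What has to be shown---and what ``recursive bookkeeping'' really means here---is a comparison in graded-lex proving that whenever such a term occurs, some term of the shape $t_j^{q-2}\lm(h_d)$ or $t_j^{q-2}t_{e_1}\lm(h_{d-1})$ or $t_1\cdots\widehat{t_j}\cdots t_d\,t_{e_1}^{q-2}t_{e_2}$ is present and strictly larger (compare first on the $t_1,\ldots,t_{j-1}$ exponents, then on the $t_j$-exponent). The paper gets this for free from the closed form; in your monomial-by-monomial organization it is an additional argument you have not supplied. A smaller point in the same vein: your case-(i) reductions produce arbitrary square-free degree-$(d-1)$ monomials $M_d/t_j$ (the monomials of $h_{d-1}$), not just $t_1\cdots\widehat{t_j}\cdots t_d$, so you also need the easy observation that every such monomial is strictly below $t_1\cdots\widehat{t_j}\cdots t_d$ in graded-lex.
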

\begin{proof}
Let $j \in \{1, \ldots, d\}$, we write $f$ as  
\begin{equation}\label{f}
f =  t_1 \cdots t_d + t_j h_{d - 1} + t_1 \cdots \widehat{t_j}  \cdots 
 t_d h_1 + h_d 
\end{equation}
where for all $i = 1, d-1, d$ we have that $h_i$, if not equal to zero, is a 
homogeneous square-free polynomial of degree $i$ such that 
every monomial is not a multiple of $t_j$ and also does not contain all the 
variables in $\{t_1,\ldots, t_d\} \setminus \{t_j\}$ (in particular, if $h_1 
\neq 0$ 
then $h_1$ is a linear form in the variables 
$\{t_{d + 1}, \ldots, t_s\}$, since $f$ is square-free).

With this, we have
\begin{equation*}
\begin{split}
S(t_j^{q - 1} - 1,& f) = \,t_1 \cdots \widehat{t_j}  \cdots 
 t_d (t_j^{q - 1} - 1) - t_j^{q - 2} f \\
=&  - t_1 \cdots \widehat{t_j}  \cdots t_d - t_j^{q - 1} h_{d - 1} - 
(t_1 \cdots t_j^{q - 2} \cdots t_d) h_1 - t_j^{q - 2} h_d.
\end{split}
\end{equation*}

If $h_{d - 1} \neq 0$ then 
dividing $S(t_j^{q - 1} - 1, f)$ by $t_j^{q - 1} - 1$ we get $-h_{d - 1}$ for 
quotient and the remainder is 
\begin{equation*}
\begin{split}
S_1 :=&\, S(t_j^{q - 1} - 1, f) + h_{d - 1}(t_j^{q 
- 1} - 1) \\ =& - t_1 \cdots \widehat{t_j} \cdots t_d -  h_{d - 1} -  
(t_1 \cdots t_j^{q - 2} \cdots t_d) h_1 - t_j^{q - 2} h_d. 
\end{split}
\end{equation*}
If $h_{d - 1} = 0$ 
then $S_1 =  S(t_j^{q - 1} - 1, f)$, so we continue with $S_1$, regardless of 
$h_{d-1}$ being zero or not. 

If $h_1 = 0$ then no monomial in $S_1$ is a multiple of $t_i^{q - 1}$, for 
$i = 
1,\ldots, d$ or multiple of $\lm(f)$, so $S_1$ is the remainder in the division 
of 
$S(t_j^{q - 1} - 1, f)$ by $\B$. To analyze the  possibilities for the 
leading monomial of 
$S_1$, we consider two cases: \\
a) if  $h_d = 0$ then the leading monomial of $S_1$ is $t_1 \cdots  
\widehat{t_j}  \cdots t_d$, because all 
monomials of $h_{d - 1}$ (in case $h_{d - 1} \neq 0$) do not have the variable 
$t_j$ and certainly have at least one of the variables $t_{d+1}, \ldots, t_s$; 
\\
b) if  $h_d \neq  0$ then the leading monomial of $S_1$ is a monomial of the 
type $t_j^{q - 2} M$, where $M$ is a  square-free monomial 
of degree $d$ in $\{t_{1}, \ldots, t_{s}\}\setminus \{t_j\}$,
which does not contain all variables in $\{t_1,  \ldots , 
t_d\}\setminus\{t_j\}$. 
This is because
 $t_j^{q - 2} h_d$ is the homogeneous part of $S_1$ with the highest degree, 
 which is $q - 2 + d$. This proves the theorem 
in the case where $h_1 = 0$.

From now on we assume that $h_1 \neq 0$ (we recall that we are also assuming $q 
\geq 4$).  
We will do a close examination of the division of $S_1$ by $\B$, and for 
convenience we write  $h_1 = \sum_{i = 1}^u a_i M_i$, with $a_i \in 
\mathbb{F}_q^*$ 
for all $i 
= 
1, \ldots, u$, and $M_u \prec \cdots \prec M_1$, so that 
\[
S_1 = - (t_1\cdots  t_d)t_j^{\alpha}\left( \sum_{i = 1}^u a_i M_i \right) + 
r, 
\]
where $\alpha = q - 3$ and $r = - t_1  \cdots  \widehat{t_j}  \cdots  t_d 
-  
h_{d - 1} - t_j^{q - 2} h_d$.
We note that no monomial of $S_1$ is a multiple of $t_i^{q - 1}$, 
for $i = 1, \ldots, s$, and the monomials in
$-(t_1 \cdots  t_d)t_j^{\alpha}( \sum_{i = 1}^u a_i M_i )$ are the only ones 
in $S_1$ 
which are 
multiple of $\lm(f)$, so the division of $S_1$ by $\B$ starts with the 
division of $S_1$ by $f$.
We have $\lt(-(t_1 \cdots  t_d)t_j^{\alpha}( \sum_{i = 1}^u a_i M_i )) = 
- a_1(t_1 \cdots  
t_d)t_j^{\alpha} M_1$, so the first term of the quotient in the division is  
$- a_1 t_j^{\alpha} M_1$, and the first partial remainder is 

\begin{equation*}
\begin{split}
r_1 &= - (t_1 \cdots  t_d)t_j^{\alpha}\left( \sum_{i = 2}^u a_i M_i \right) + 
r \\
&\textrm{\hspace{22ex}} + a_1 t_j^{\alpha} M_1 (t_j h_{d - 1} + t_1  \cdots  
\widehat{t_j}  \cdots 
t_d h_1 + h_d).
\end{split}
\end{equation*}

As above, no monomial in $r_1$ is a multiple of $t_i^{q - 1}$, 
for $i = 1, \ldots, s$, and the monomials in the homogeneous polynomial 
\[
\tilde{r}_1 := - (t_1 \cdots  t_d)t_j^{\alpha}( \sum_{i = 2}^u a_i M_i ) + 
a_1 
t_j^{\alpha} 
M_1 t_1  \cdots  \widehat{t_j}  \cdots 
t_d h_1
\]
are the only ones 
in $r_1$ 
which are 
multiple of $\lm(f)$, so we proceed with the division by dividing $r_1$ by $f$.
We have $\lt(\tilde{r}_1) = 
- a_2(t_1 \cdots  
t_d)t_j^{\alpha} M_2$, so we add as 
second summand 
in the quotient the term $- a_2 t_j^{\alpha} M_2$, and the second 
partial remainder is 
\begin{equation*}
\begin{split}
r_2 &= S_1 + (t_j^\alpha(a_1M_1 + a_2 M_2)) f  
\\ 
&= r_1 + t_j^\alpha a_2 M_2  f \\ 
&= - (t_1 \cdots  t_d)t_j^{\alpha}\left( \sum_{i = 3}^u a_i M_i \right) + r \\
& + 
a_1 t_j^{\alpha} M_1 (t_j h_{d - 1} + t_1  \cdots  \widehat{t_j}  \cdots 
t_d h_1 + h_d)\\
& + 
a_2 t_j^{\alpha} M_2 (t_j h_{d - 1} + t_1  \cdots  \widehat{t_j}  \cdots 
t_d h_1 + h_d)
\end{split}
\end{equation*}

%
%
%
%

Again, no monomial in $r_2$ is a multiple of $t_i^{q - 1}$, 
for $i = 1, \ldots, s$, and the monomials in the homogeneous polynomial 
\[
\tilde{r}_2 := - (t_1 \cdots  t_d)t_j^{\alpha}\left( \sum_{i = 3}^u a_i M_i 
\right) 
+ t_j^{\alpha} (a_1 M_1 + a_2 M_2)  t_1  \cdots  \widehat{t_j}  \cdots 
t_d h_1
\]
are the only ones in $r_2$ which are 
multiple of $\lm(f)$.
Thus, after $u$ steps, we 
get the partial remainder
\begin{equation*}
\begin{split}
S_2 &:= S_1 + t_j^{q - 3} h_1 f \\
&= - t_1  \cdots  \widehat{t_j}  \cdots  t_d -  h_{d - 1} -  
 t_j^{q - 2} h_d\\
& + t_j^{q - 2}h_1 h_{d - 1} + t_1  \cdots  
t_j^{q - 3}  \cdots t_d h_1^2 + t_j^{q - 3} h_1 h_d  
\end{split}
\end{equation*}

Since $q - 3 > 0$ the monomials in
$(t_1  \cdots  t_j^{q - 3}  \cdots  t_d) h_1^2$ are the only ones in $S_2$ 
which are 
multiple of $\lm(f)$, and no monomial of $S_2$ is a multiple of $t_i^{q - 1}$, 
for $i = 1, \ldots, s$, so we continue with dividing $S_2$ by $f$. We write  
$h_1^2 = \sum_{i = 1}^{\tilde{u}} \tilde{a}_i \tilde{M}_i$, with $\tilde{a}_i 
\in \mathbb{F}_q^*$ for all $i 
= 1, \ldots, \tilde{u}$, and $\tilde{M}_1 \succ \cdots \succ 
\tilde{M}_{\tilde{u}}$, 
and we also write
\[
S_2 = (t_1 \cdots  t_d)t_j^{\tilde{\alpha}}\left( \sum_{i = 1}^{\tilde{u}} 
\tilde{a}_i \tilde{M}_i \right) + \tilde{r}, 
\]
where $\tilde{\alpha} = q - 4$ and 
\begin{equation*}
\begin{split}
\tilde{r} &= - t_1  \cdots  \widehat{t_j}  \cdots  t_d -  h_{d - 1} -  
 t_j^{q - 2} h_d\\
& + t_j^{q - 2}h_1 h_{d - 1}  + t_j^{q - 3} h_1 h_d.  
\end{split}
\end{equation*}

We are in the same situation as before, because all the monomials in $h_1^2$, 
$h_1 h_{d - 1}$, and $h_1 h_d$ are not multiples of $t_j$, and each of them is 
also not a multiple of at least one $t_i \in \{t_1, \ldots, t_d\} \setminus 
\{t_j\}$. The first term of the quotient will be $t^{\alpha}_j \tilde{a}_1 
\tilde{M}_1$, and proceeding like this,
after $\tilde{u}$ steps the partial remainder in the division of $S_2$ by 
$f$ is 
\begin{equation*}
\begin{split}
S_3 &:= S_2 - t_j^{q - 4} h_1^2 f   \\
&= - t_1  \cdots  \widehat{t_j}  \cdots  t_d -  h_{d - 1} -  
  t_j^{q - 2} h_d
  + t_j^{q - 2}h_1 h_{d - 1} + t_j^{q - 3} h_1 h_d\\ 
& \textrm{\hspace{2cm}} - t_j^{q - 3} h_1^2 h_{d - 1} - t_1  \cdots  
t_j^{q - 4}  \cdots 
t_d h_1^3 - t_j^{q - 4} h_1^2 h_d  
\end{split}
\end{equation*}

Since $S_2 = S_1 + t_j^{q - 3} h_1 f$, we have 
\[
S_3 = S_2 - t_j^{q - 4} h_1^2 
f  =  S_1 + t_j^{q - 3} h_1 f - t_j^{q - 4} h_1^2 f.
\]

Thus, repeating the steps above, we arrive at a partial remainder
\[
S_{q - 1} = S_1 + \left( (-1)^2 t_j^{q - 3}h_1 + (-1)^3 t_j^{q - 4} h_1^2 + 
\cdots 
+ (-1)^{q - 1} h_1^{q - 2}\right) f.
\]
Writing $(-1)^{q - 1} = 1$, which is true regardless of $q$ being odd or 
even, and replacing $S_1$ and $f$ by their expressions, we get 
\begin{equation}  \label{eq1}
\begin{split}
S_{q - 1} &=  t_1  \cdots  \widehat{t_j}  \cdots  t_d (h_1^{q -1} - 1) \\
&+ h_{d - 1}(-1 + t_j^{q - 2} h_1 - \dots + t_j h_1^{q -2}) \\
&+ h_d(- t_j^{q - 2} + t_j^{q - 3} h_1 - \cdots + h_1^{q - 2} ).
\end{split}
\end{equation}
No monomial of $S_{q - 1}$ is a multiple of $\lm(f)$. To continue the division
of $S(t_j^{q - 1} - 1, f)$ by $\B$  we consider 
two cases. \\
I) In the first case we assume that $h_1 = a t_e$, with $e \in \{d+1, \ldots, 
s\}$ and $a \in \fq^*$. Then $S_{q - 1}$ may be divided by $t_e^{q - 1} - 1$, 
the quotient is 
$ t_1  \cdots  \widehat{t_j}  \cdots  t_d $ (recall that $a^{q - 1}= 1$) 
and 
we get as remainder

\begin{equation} \label{eq2}
\begin{split}
\tilde{S}_{q - 1} &= S_{q - 1} - (t_1  \cdots  \widehat{t_j}  \cdots 
t_d)(t_e^{q - 1} - 1) \\
&= h_{d - 1}(- 1 + t_j^{q - 2} a t_e - \cdots + t_j a^{q - 2} t_e^{q -2}) \\
&+ h_d(-t_j^{q - 2} + t_j^{q - 3} a t_e - \cdots + a^{q - 2}t_e^{q - 2} ) \\
&= - h_{d - 1}  +  t_j^{q - 2}(a t_e h_{d - 1}- h_d)  
- t_j^{q - 3} a t_e (a t_e h_{d - 1} - h_d) \\
&\textrm{\hspace{3cm}}+ \cdots + 
t_j a^{q - 3} t_e^{q - 3}(a t_e h_{d - 1} - h_d) + h_d a^{q - 2} t_e^{q -2}.
\end{split}
\end{equation}

If $a t_e h_{d - 1} - h_d \neq 0$ then all monomials in $\tilde{S}_{q - 1}$ are 
not multiples of $\lm(f)$ or any $t_i^{q - 1}$, for $i = 1, \ldots, s$, thus  
$\tilde{S}_{q - 1}$ is the (final) remainder in the division of $S(t_j^{q - 1} 
- 
1, f)$ 
by $\B$. The leading monomial of $\tilde{S}_{q - 1}$ is $t_j^{q - 2} \lm(a t_e 
h_{d - 1}- h_d)$ which is of the form $t_{j}^{q - 2} t_{i_1}  
\cdots  t_{i_d}$ or of the form $t_{j}^{q - 2} t_{e}^2  t_{i_2} \cdots  
t_{i_{d-1}}$ (the latter happens when $\lm(t_e h_{d - 1}) > \lm(h_d)$ and $t_e$ 
is a divisor of $\lm(h_{d - 1})$). This proves the theorem when $h_1 = a t_e$ 
and $a t_e h_{d - 1} - h_d \neq 0$.

Assume now that $a t_e h_{d - 1} = h_d$. From equation \eqref{eq2}  we get that 
$\tilde{S}_{q - 1} = - h_{d - 1} + h_d a^{q - 2} t_e^{q - 2} = 
h_{d - 1}(t_e^{q - 1} - 1)$, so the final remainder 
in the division of $S(t_j^{q - 1} - 1, f)$  by $\B$ is zero, and the theorem is 
proved in this case.
\vspace{1ex}

\noindent
II) In this last case we assume that $h_1 = \sum_{i = 1}^u a_i t_{e_i}$, where 
$a_i \in \mathbb{F}_q^*$ for all $i = 1, \ldots, u$, $u \geq 2$ and $t_{e_1} > 
\cdots > 
t_{e_u}$. We have 
\begin{equation} \label{h1^{q-1}}
h_1^{q - 1} = \sum_{i = 1}^u t_{e_i}^{q - 1} + (q - 1) 
a_1^{q - 2} a_2 t_{e_1}^{q - 2} t_{e_2} + \tilde{h},
\end{equation}
where  
$\tilde{h}$ is a nonzero homogeneous polynomial of degree $q - 1$ with all 
monomials less than $t_{e_1}^{q - 2} t_{e_2}$ and no monomial of the form 
$t_{e_i}^{q - 1}$, with $i = 1, \ldots, u$. We rewrite equation 
\eqref{eq1} as
\begin{equation*} 
\begin{split}
S_{q - 1} &=  t_1  \cdots  \widehat{t_j}  \cdots  t_d (\sum_{i = 1}^u 
t_{e_i}^{q - 1} - 1) \\
& + t_1  \cdots  \widehat{t_j}  \cdots  t_d
 ((q - 1) a_1^{q - 2} 
a_2 t_{e_1}^{q - 2} t_{e_2} + \tilde{h}) \\
&+ h_{d - 1}(- 1 + t_j^{q - 2} h_1 - \dots + t_j h_1^{q -2}) \\
&+ h_d(-t_j^{q - 2} + t_j^{q - 3} h_1 - \cdots + h_1^{q - 2}) ,
\end{split}
\end{equation*}
so dividing $S_{q - 1}$ successively by $t_{e_1}^{q - 1} - 1, \ldots, 
t_{e_u}^{q - 1} - 1$, we get the remainder
\begin{equation*} 
\begin{split}
\tilde{S}_{q - 1} &=  (u -1) t_1  \cdots  \widehat{t_j}  \cdots  t_d \\
& + t_1  \cdots  \widehat{t_j}  \cdots  t_d
 ((q - 1) a_1^{q - 2} 
a_2 t_{e_1}^{q - 2} t_{e_2} + \tilde{h}) \\
&+ h_{d - 1}(- 1 + t_j^{q - 2} h_1 - \dots + t_j h_1^{q -2}) \\
&+ h_d(-t_j^{q - 2} + t_j^{q - 3} h_1 - \cdots +  h_1^{q - 2}) .
\end{split}
\end{equation*}

We have
\[
\lm(t_1  \cdots  \widehat{t_j}  \cdots  t_d
 ((q - 1) a_1^{q - 2} 
a_2 t_{e_1}^{q - 2} t_{e_2} + \tilde{h})) =  t_1  \cdots  \widehat{t_j}  
\cdots  t_d t_{e_1}^{q - 2} t_{e_2}, 
\]
\[
\lm(h_{d - 1}(- 1 + t_j^{q - 2} h_1 - \dots + t_j h_1^{q -2})) = t_j^{q - 2} 
t_{e_1} \lm(h_{d - 1})
\]
and
\[
\lm(h_d(-t_j^{q - 2} + t_j^{q - 3} h_1 - \cdots + h_1^{q - 2} )) =
t_j^{q - 2}  \lm(h_{d})
\]
and these are the possibilities for the leading monomial of 
$\tilde{S}_{q - 1}$. 

No monomial of $\tilde{S}_{q - 1}$ is a multiple of $\lm(f)$. The only 
possibility to appear, in $\tilde{S}_{q - 1}$, a monomial which is a multiple 
of 
$t_i^{q - 1}$ for some $i \in \{1, \ldots, s\}$, is in the products $h_{d - 1} 
t_j 
h_1^{q -2}$ and $h_{d} h_1^{q - 2}$. Indeed, if $t_{e_k}$ divides some 
monomial 
of $h_{d - 1}$ or $h_d$, for $k \in \{1, \ldots, u\}$, then the monomial 
$t_{e_k}^{q - 2}$ appears in $h_1^{q - 2}$, and we will have $t_{e_k}^{q - 1}$
appearing in a monomial $M$ in $h_{d - 1} t_j 
h_1^{q -2}$ or $h_{d} h_1^{q - 2}$. In this case we may divide $\tilde{S}_{q - 
1}$ by 
$t_{e_k}^{q - 1} - 1$ 
obtaining a remainder which coincides with $\tilde{S}_{q - 1}$, except that in 
the monomial $M$ we replace $t_{e_k}^{q - 1}$ by $1$. After we do all these 
divisions, if it's the case, we will arrive at the final remainder, and the 
possibilities  for the leading monomial are the same which were listed above. 
Observe that  $t_{e_1}$ may divide 
$\lm(h_{d - 1})$, in which case $t_j^{q - 2} 
t_{e_1} \lm(h_{d - 1}) = t_j^{q - 2} t_{e_1}^2 t_{i_2}  \cdots  t_{i_{d-1}}$. 
This 
finishes the proof of the theorem.
\end{proof}

We make explicit a consequence of the above proof which will be useful in what follows.

\begin{corollary} \label{rem-eq-zero}
Let $f \in \fqt$ be 
a homogeneous square-free monic polynomial of degree $d$ such that $\lm(f) = 
t_1  \cdots  
t_d$, and let $j \in \{1, \ldots, d\}$. Then the remainder in the division of 
$S(t_j^{q - 1} - 1, f)$ by $\{t_i^{q - 1} - 1 \mid i = 1, \ldots, s\} \cup \{ 
f\}$ is  zero if and only if $t_j + a t_{e}$ divides $f$, 
where $t_{e} \in \{t_{d + 1}, \ldots, t_s\}$ and $a \in \fq^*$.
\end{corollary}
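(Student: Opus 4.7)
The plan is to extract the equivalence directly from the case analysis already carried out in the proof of Theorem \ref{rem-mon}, identifying the unique scenario in which the remainder vanishes. Throughout I will work with the canonical decomposition $f = t_1 \cdots t_d + t_j h_{d-1} + t_1 \cdots \widehat{t_j} \cdots t_d h_1 + h_d$ of equation (\ref{f}), which is uniquely determined by $f$ once one separates the monomials of $f$ according to whether they contain $t_j$ and whether they contain every variable of $\{t_1, \ldots, t_d\} \setminus \{t_j\}$.

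For the forward direction (remainder $= 0$ implies $t_j + at_e$ divides $f$), I will trace through the branches of the proof of Theorem \ref{rem-mon}. If $h_1 = 0$, the polynomial $S_1$ has leading monomial $t_1 \cdots \widehat{t_j} \cdots t_d$ (if $h_d = 0$) or $t_j^{q-2}\lm(h_d)$ (if $h_d \neq 0$), so the remainder is nonzero. If $h_1 = a t_e$ is a single monomial but $a t_e h_{d-1} - h_d \neq 0$ (the generic sub-case of Case~I), the polynomial $\tilde{S}_{q-1}$ of equation (\ref{eq2}) is nonzero. If $h_1$ has two or more terms (Case~II), the explicit formula computed there shows $\lm(\tilde{S}_{q-1})$ is nonzero. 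The only remaining possibility is $h_1 = a t_e$ for some $t_e \in \{t_{d+1}, \ldots, t_s\}$ and $a \in \fq^*$, together with $h_d = a t_e h_{d-1}$. Substituting this into the decomposition of $f$ gives
\begin{equation*}
f = (t_j + a t_e)\bigl(t_1 \cdots \widehat{t_j} \cdots t_d + h_{d-1}\bigr),
\end{equation*}
so $t_j + a t_e$ divides $f$.

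For the converse, I will assume $f = (t_j + a t_e)\,g$ with $e \in \{d+1, \ldots, s\}$ and $a \in \fq^*$, and argue that square-freeness of $f$ forces $g$ to involve neither $t_j$ (otherwise $(t_j + a t_e)g$ acquires a $t_j^2$ term) nor $t_e$ (otherwise a $t_e^2$ term appears). Combined with $\lm(f) = t_1 \cdots t_d$, this makes $g$ a square-free polynomial of degree $d-1$ with $\lm(g) = t_1 \cdots \widehat{t_j} \cdots t_d$. Writing $g = t_1 \cdots \widehat{t_j} \cdots t_d + G$ and expanding $(t_j + a t_e)g$, comparison with equation (\ref{f}) yields $h_{d-1} = G$, $h_1 = a t_e$, and $h_d = a t_e\, G = a t_e h_{d-1}$. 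These are precisely the conditions of Case~I under which the division performed in the proof of Theorem \ref{rem-mon} terminates with remainder zero.

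The main bookkeeping obstacle is confirming that the canonical decomposition uniquely determines $h_1, h_{d-1}, h_d$, so that the identification of terms between the expansion of $(t_j + a t_e)g$ and the decomposition of $f$ is unambiguous. This reduces to a routine sorting of monomials by the two criteria mentioned above, together with verifying that the resulting $h_1, h_{d-1}, h_d$ satisfy the constraints imposed in equation (\ref{f}), which follows because $g$ contains neither $t_j$ nor $t_e$.
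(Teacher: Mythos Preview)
Your proposal is correct and follows essentially the same approach as the paper: both extract from the case analysis of Theorem~\ref{rem-mon} that the remainder vanishes precisely when $h_1 = a t_e$ and $h_d = a t_e h_{d-1}$, and then identify these conditions with the factorization $f = (t_j + a t_e)(t_1 \cdots \widehat{t_j} \cdots t_d + h_{d-1})$. Your treatment is simply more explicit, particularly in the converse direction where you carefully argue that any factor $g$ with $f = (t_j + a t_e)g$ must avoid $t_j$ and $t_e$ and hence coincide with $t_1 \cdots \widehat{t_j} \cdots t_d + h_{d-1}$.
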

\begin{proof}
Write $f$ as 
\begin{equation*}
f =  t_1  \cdots  t_d + t_j h_{d - 1} + t_1  \cdots  \widehat{t_j}  \cdots 
t_d h_1 + h_d 
\end{equation*}
with $h_1$,  $h_{d - 1}$ and  $h_d$ as in the above proof.
Note that the proof shows that 
the remainder is zero if and only if $h_1 = a t_e$ and $h_d = a t_e h_{d-1}$,
with $t_{e} \in \{t_{d + 1}. \ldots, t_s\}$ and $a \in \fq^*$.
And these conditions are equivalent to 
\begin{equation*} 
\begin{split}
f &=  t_1  \cdots  t_d + t_j h_{d - 1} + a t_1  \cdots  \widehat{t_j} 
 \cdots  t_d t_{e} + a t_{e} h_{d - 1} \\
&= (t_j + a t_{e})(t_1  \cdots  \widehat{t_j}  \cdots  t_d + h_{d - 
1}).
\end{split}
\end{equation*}
\end{proof}

We use the above results to characterize minimum distance codewords of 
$C(d)$.

\begin{proposition} \label{min-word}
Let $f \in \fqt$ be a homogeneous square-free monic  
polynomial of degree $d$, such that $\lm(f) = t_1  \cdots t_d$, and assume 
that 
$2d \leq s$. Then 
$\varphi(f + I_X)$ is a minimum weight codeword if and only if $f = (t_1 + 
a_{c_1} t_{c_1}) \cdots  (t_d + a_{c_d} t_{c_d})$, with $c_1, \ldots, c_d \in 
\{d+1, \ldots, s\}$ and  $a_{c_1}, \ldots, a_{c_d} \in \fq^*$.
\end{proposition}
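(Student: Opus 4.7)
The proof splits into sufficiency and necessity. In both directions I would lean on the footprint bound of Proposition \ref{bound-weight}, combined with the Gr\"obner-basis analysis of Theorem \ref{rem-mon} and Corollary \ref{rem-eq-zero}.

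For sufficiency, suppose $f = \prod_{j=1}^{d}(t_j + a_{c_j} t_{c_j})$ with $c_j \in \{d+1,\ldots,s\}$ and $a_{c_j} \in \fq^*$. Since $f$ is square-free the $c_j$ must be pairwise distinct (otherwise two factors with the same $t_{c_j}$ would give a $t_{c_j}^2$ term), and the hypothesis $2d \leq s$ is exactly what ensures that $d$ such distinct indices fit in $\{d+1,\ldots,s\}$. I would then count the weight directly: a point $P=(x_1,\ldots,x_s) \in (\fq^*)^s$ satisfies $f(P) \neq 0$ iff $x_j \neq -a_{c_j}x_{c_j}$ for every $j = 1,\ldots,d$. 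Choosing $x_{c_1},\ldots,x_{c_d}$ freely in $\fq^*$ gives $(q-1)^d$ choices, each $x_j$ then has $q-2$ admissible values in $\fq^*$, and the remaining $s-2d$ coordinates are arbitrary in $\fq^*$. The total is $(q-2)^d(q-1)^{s-d} = \delta(\C(d))$, so $\varphi(f+I_X)$ is a minimum-weight codeword.

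For necessity, assume $\omega(\varphi(f+I_X)) = (q-2)^d(q-1)^{s-d}$ and set $\B = \{t_i^{q-1}-1 : i=1,\ldots,s\} \cup \{f\}$. The only $S$-polynomials whose reduction is not automatic (from coprime leading monomials) are $S(t_j^{q-1}-1, f)$ for $j \in \{1,\ldots,d\}$. Suppose for contradiction that for some such $j$ the remainder $z$ of $S(t_j^{q-1}-1, f)$ modulo $\B$ is nonzero. The division algorithm guarantees $\lm(z) \in \Delta(I_X)$ and $\lm(f) \nmid \lm(z)$, so $\lm(z)$ itself witnesses $N(\lm(z)) \geq 1$. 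Since $\B \cup \{z\}$ still generates $I_X + (f)$, running the footprint computation of the section's outline with $\lm(z)$ folded in yields
\[
\omega(\varphi(f+I_X)) \geq (q-2)^d(q-1)^{s-d} + N(\lm(z)) > \delta(\C(d)),
\]
contradicting the assumed minimality. Hence every such $S$-polynomial reduces to zero modulo $\B$.

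Applying Corollary \ref{rem-eq-zero}, for each $j \in \{1,\ldots,d\}$ there exist $e_j \in \{d+1,\ldots,s\}$ and $a_j \in \fq^*$ with $(t_j + a_j t_{e_j}) \mid f$. These $d$ linear forms are pairwise coprime, since the variable $t_j$ appears in exactly the $j$-th factor, so their product divides $f$; as $f$ is monic of degree $d$, equality holds and $f = \prod_{j=1}^{d}(t_j + a_j t_{e_j})$, with the $e_j$ forced to be distinct by square-freeness of $f$. The main obstacle in this plan is handling the case $z \neq 0$ without re-opening the four-case analysis of Theorem \ref{rem-mon}; the key simplification is the observation that $\lm(z)$ itself belongs to the set counted by $N(\lm(z))$, which delivers the strict improvement $N(\lm(z)) \geq 1$ for free and completes the contradiction.
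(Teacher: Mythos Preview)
Your proof is correct and follows essentially the same route as the paper: the sufficiency direction is the same direct count of nonzeros, and the necessity direction is the same footprint argument (a nonzero remainder $z$ would force $|\Delta(I_X+(f))|$ strictly below $(q-1)^s-(q-2)^d(q-1)^{s-d}$) followed by Corollary~\ref{rem-eq-zero}. Your version is in fact slightly more explicit than the paper's in two places: you spell out why $N(\lm(z))\geq 1$ (namely $\lm(z)$ itself lies in the relevant set), and you justify the passage from ``$(t_j+a_jt_{e_j})\mid f$ for each $j$'' to ``$f$ equals the product'' via pairwise coprimality of the linear factors and a degree comparison, whereas the paper simply asserts the conclusion.
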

\begin{proof}
From Proposition \ref{bound-weight} we know that 
$\omega(\varphi(f + I_X)) \geq | \Delta(I_X)| - |\Delta(I_X + (f))|$, 
and from 
\[
\Delta(I_X + (f)) \subset \Delta(I_X) \setminus \{ M \in \Delta(I_X) \mid M 
\textrm{ is a 
multiple of } \lm(f) \}
\]
we get 
$| \Delta(I_X)| - |\Delta(I_X + (f))| \geq (q - 2)^d (q - 1)^{s - d}$.
From \cite[Thm. 4.5]{evalcodes} we know that the minimum 
distance, in the case $2 d \leq s$, is exactly $(q - 2)^d(q - 1)^{s - d}$.
%
%
%
%

Assume that $f$ is such that $w(\varphi(f + I_X)) = (q - 2)^d(q - 1)^{s - d}$, 
then 
$|\Delta(I_X + (f))| = (q - 1)^s - (q - 2)^d(q - 1)^{s - d}$ and from 
Buchberger's algorithm we get that
$\{t_1^{q - 1} -1, \ldots, 
t_s^{q - 1} -1, f\}$ must be a Gr\"obner basis for $I_X + (f)$, otherwise we 
would have to add at least one new generator to
$\{t_1^{q - 1} -1, \ldots, 
t_s^{q - 1} -1, f\}$ in order to have a Gr\"obner basis, which would imply that
$|\Delta(I_X + (f))| < (q - 1)^s - (q - 2)^d(q - 1)^{s - d}$.  In particular, 
the remainder of $S(t_j^{q - 1} - 1, f)$ in the division by $\{t_i^{q - 1} 
- 1 
\mid i = 1, \ldots, s\} \cup \{ f\}$ is zero for all $j = 1, \ldots, d$. 
Hence, from Corollary \ref{rem-eq-zero} we get that
%
%
$f$ must be of the form $f = (t_1 + a_{c_1} t_{c_1}) \cdots  (t_d + a_{c_d} 
t_{c_d})$.

On the other hand, if $f = (t_1 + a_{c_1} t_{c_1}) \cdots  (t_d + a_{c_d} 
t_{c_d})$ then from all $(q - 1)^2$ possibilities  of values $(b_1,b_{c_1})$ 
for the pair of variables $(t_1, t_{c_1})$, we have that exactly $q - 1$ are 
zeros of $t_1 + a_{c_1} t_{c_1}$, namely $(b, - b (a_{c_1})^{-1})$, with $b 
\in \fq^*$. So we have $(q - 1)^2 - (q - 1) = (q -1)(q - 2)$ pair of values 
$(b_1,b_{c_1}) \in (\fq^*)^2$
which are not zeros of $t_1 + a_{c_1} t_{c_1}$. The calculation is the same 
for the other pairs $(t_j, t_{c_j})$, $j = 2, \ldots, d$, so the number of 
$2d$-tuples $(c_1, c_{c_1}, \ldots, c_d, c_{c_d})$ which are not zeros of $f$ 
is $(q -1)^d(q - 2)^d$. Thus the number of $s$-tuples in $(\fq^*)^s$ which are 
not zeros of $f$ 
is $(q -1)^d(q - 2)^d (q - 1)^{s - 2d} = (q - 2)^d(q - 1)^{s - d}$, which 
proves 
that $\varphi(f + I_X)$ is a minimum weight codeword.
\end{proof}

\section{Main results}
According to Theorem \ref{rem-mon} we have four 
possibilities for the leading monomial
of a nonzero 
remainder of $S(t_j^{q - 1} - 1, f)$ in the division by $\B$, which we list 
below. \\
i) $M_1 = t_1  \cdots  \widehat{t_j}  \cdots  t_d$; \\
ii) $M_2 =  t_j^{q - 2} t_{c_1}  \cdots  t_{c_u}.t_{b_1} \cdots  
t_{b_v}$, where 
\[
\{t_{c_1} , \ldots , t_{c_u}\} \subset \{t_1, \ldots, t_d\}\setminus \{ t_j\}, 
\;\; \{t_{b_1}, \ldots , t_{b_v}\} \subset \{t_{d + 1}, \ldots, 
t_s\},
\]
with $u + v = d$, $0 \leq u \leq d - 2$ (because there exists $t_\ell$ in 
$\{t_1, \ldots, t_s\} \setminus \{t_j\}$ such that $t_\ell \nmid M_2$) and $2 
\leq v \leq d$; \\
iii) $M_3 = t_j^{q - 2} t_e^2t_{c_1}  \cdots  t_{c_u}.t_{b_1} \cdots  
t_{b_v}$, where 
\[
\{t_{c_1} , \ldots , t_{c_u}\} \subset \{t_1, \ldots, t_d\}\setminus \{ t_j\}, 
\;\; \{t_e\} \cup  \{t_{b_1}, \ldots , t_{b_v}\} \subset \{t_{d + 1}, \ldots, 
t_s\},
\]
$t_e$ distinct from $t_{b_1}, \cdots , t_{b_v}$, with $u + v = d - 2$, $0 \leq 
u \leq d - 2$ and $0 \leq v \leq d - 
2$;\\
iv) $M_4 = t_1  \cdots  \widehat{t_j}  \cdots  t_d t_{e_1}^{q - 2} t_{e_2}$,
where $t_{e_1}$ and $t_{e_2}$ are distinct monomials in the set $\{t_{d + 1}, 
\ldots, t_s\}$.

As observed in the beginning of Section 3 (see inequality \eqref{bound-2nd})
for each possibility of leading monomial $M_i$, $i = 1, \ldots, 4$ 
we want to count the number of monomials which are multiple of $M_i$ but are 
not 
multiple of $\lm(f)$, because then we obtain the following lower bound for 
the weight of $\varphi(f + I_X)$:
\begin{equation} \label{lower-bound}
 \begin{split}  
\omega(\varphi(f + I_X)) 
&\geq  (q - 2)^d (q - 1)^{s - d} \\
&+ |\{M \in \Delta(I_X) \mid  M 
\textrm{ is a multiple of } M_i \\
&\textrm{\hspace{25ex}} \textrm{ and not a multiple of } \lm(f) \}|.
\end{split}
\end{equation}

\begin{lemma}\label{num_mon}
Let $M_i$, $i = 1, \ldots, 4$ as above, and let $N(M_i)$ be the number  of 
monomials which are multiples of $M_i$ and not 
multiples of $\lm(f) = t_1 \cdots  t_d$. Then\\
i) $N(M_1) = (q - 2)^{d -1} (q - 1)^{s - d}$; \\
ii) $N(M_2) = (q - 2)^d (q - 1)^{s - d - v}\left( (q - 1)^{v - 1} - (q - 
2)^{v - 1}\right)$;\\
iii) $N(M_3) = (q - 3)(q - 2)^{d - 2}(q - 1)^{s - d - 1 - v}\left( (q - 
1)^{v + 1} - (q - 2)^{v + 1}\right)$; \\
iv) $N(M_4)= (q - 2)^{d}(q - 1)^{s - d - 2}$ .

\end{lemma}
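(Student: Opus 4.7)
The plan is to prove each of the four formulas by a direct counting argument: for a fixed $M_i$, count the monomials $N=t_1^{\alpha_1}\cdots t_s^{\alpha_s}\in\Delta(I_X)$ with $M_i\mid N$, then subtract those for which $\lm(f)=t_1\cdots t_d$ also divides $N$. Throughout I use that $N\in\Delta(I_X)$ means $0\le\alpha_i\le q-2$, that ``$\alpha_i\ge 1$'' contributes $q-2$ choices for $\alpha_i$, that ``$\alpha_i\ge 2$'' contributes $q-3$ choices, and that a fixed exponent $\alpha_i=q-2$ contributes a single choice.

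For (i), dividing $N$ by $M_1=t_1\cdots\widehat{t_j}\cdots t_d$ forces $\alpha_i\ge 1$ for $i\in\{1,\dots,d\}\setminus\{j\}$, while $\alpha_j$ and the exponents of $t_{d+1},\dots,t_s$ remain unrestricted. Counting gives $(q-2)^{d-1}(q-1)^{s-d+1}$ multiples of $M_1$. Among these, being a multiple of $\lm(f)$ adds the single constraint $\alpha_j\ge 1$, giving $(q-2)^d(q-1)^{s-d}$ multiples of both. Subtracting and factoring out $(q-2)^{d-1}(q-1)^{s-d}$ leaves $(q-1)-(q-2)=1$, which yields $N(M_1)=(q-2)^{d-1}(q-1)^{s-d}$.

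For (ii) and (iii), I split the variables $\{t_1,\dots,t_s\}$ into five (resp.\ six) classes: the variable $t_j$ with $\alpha_j=q-2$; the $u$ variables $t_{c_i}\subset\{t_1,\dots,t_d\}\setminus\{t_j\}$ with $\alpha\ge 1$; the remaining $d-1-u$ variables of $\{t_1,\dots,t_d\}\setminus\{t_j\}$ (free for $M_i$ divisibility, but constrained to $\alpha\ge 1$ when also requiring $\lm(f)\mid N$); the $v$ variables $t_{b_i}$ in $\{t_{d+1},\dots,t_s\}$ with $\alpha\ge 1$; the remaining $s-d-v$ (resp.\ $s-d-1-v$) variables of $\{t_{d+1},\dots,t_s\}$; and, for (iii), the single variable $t_e$ with $\alpha_e\ge 2$. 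Multiplying the per-class counts gives the number of multiples of $M_i$ as $(q-2)^d(q-1)^{s-d-1}$ in case (ii) and $(q-3)(q-2)^{d-2}(q-1)^{s-d}$ in case (iii). Replacing the $(q-1)^{d-1-u}$ factor by $(q-2)^{d-1-u}$ yields the count of monomials divisible by both $M_i$ and $\lm(f)$. Subtracting and factoring produces the displayed differences $(q-1)^{v-1}-(q-2)^{v-1}$ and $(q-1)^{v+1}-(q-2)^{v+1}$, respectively.

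For (iv), the analysis parallels (i): $M_4\mid N$ forces $\alpha_i\ge 1$ for $i\in\{1,\dots,d\}\setminus\{j\}$, $\alpha_{e_1}=q-2$, $\alpha_{e_2}\ge 1$, and leaves $\alpha_j$ plus the exponents of the remaining $s-d-2$ variables of $\{t_{d+1},\dots,t_s\}$ free; adding $\lm(f)\mid N$ only changes the $\alpha_j$ factor from $q-1$ to $q-2$. The difference factors as $(q-2)^d(q-1)^{s-d-2}\bigl[(q-1)-(q-2)\bigr]=(q-2)^d(q-1)^{s-d-2}$, as claimed. No step here is genuinely difficult; the only thing to be careful about is the bookkeeping of which variables fall into which class, especially in case (iii), where one must not double-count $t_e$ among the $t_{b_i}$ and must check that $d-1-u\ge 0$ and $s-d-1-v\ge 0$ under the stated bounds on $u,v$.
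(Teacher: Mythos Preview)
Your proof is correct and follows essentially the same approach as the paper: for each $M_i$ you count the multiples of $M_i$ in $\Delta(I_X)$, then the common multiples of $M_i$ and $\lm(f)$ (equivalently, the multiples of their least common multiple), and subtract. Your systematic partition of the variables into classes is just a tidier rendering of the paper's case-by-case count, and the factorizations you indicate (using $d-1-u=v-1$ in (ii) and $d-1-u=v+1$ in (iii)) match the paper's formulas exactly.
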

\begin{proof}
i) From $M_1 = t_1  \cdots  \widehat{t_j}  \cdots  t_d$ we easily get that   
 $N(M_1) = (q - 2)^{d -1} (q - 1)^{s - d}$. \\
ii) We have $M_2 = t_j^{q - 2} t_{c_1}  \cdots  t_{c_u}.t_{b_1} \cdots  
t_{b_v}$, with $u + v = d$, $0 \leq u \leq d - 2$ and $2 \leq v \leq d$. The 
number of monomials which are multiples of $M_2$ is then 
$N_1 = (q - 2)^d(q - 1)^{s - d - 1}$. The set of monomials which are multiples 
of $M_2$ and multiples of $\lm(f)$ is exactly the set of monomials 
which are 
multiples of $t_1  \cdots  t_j^{q - 2} \cdots t_d.t_{b_1} \cdots  
t_{b_v}$, so the number of such monomials is $N_2 = (q - 2)^{d - 1}(q - 2)^v 
(q - 1)^{s - d - v} = (q-2)^{d - 1 + v} (q - 1)^{s - d - v}$. Thus 
\[
N(M_2) = N_1 - N_2 = (q - 2)^d (q - 1)^{s - d - v}\left( (q - 1)^{v - 1} - 
(q - 
2)^{v - 1}\right).
\]
\\
iii) We have $M_3 = t_j^{q - 2} t_e^2t_{c_1}  \cdots  t_{c_u}.t_{b_1} \cdots 
 t_{b_v}$, with $u + v = d - 2$, $0 \leq u \leq d - 2$ and $0 \leq v \leq d - 
2$. The 
number of monomials which are multiples of $M_3$ is $\tilde{N}_1 = (q - 3)(q - 
2)^{d - 
2}(q - 1)^{s - d}$.  The set of monomials which are multiples 
of $M_3$ and multiples of $\lm(f)$ is exactly the set of monomials 
which 
are multiples of $t_1  \cdots  t_j^{q - 2} \cdots t_d. t_e^2.t_{b_1} 
\cdots  t_{b_v}$, so the number of such monomials is 
\begin{equation*}
\begin{split}
\tilde{N}_2 &= (q - 2)^{d 
- 1}(q - 3)(q - 2)^v (q - 1)^{s - d - 1 - v} \\ 
&= (q - 3)(q - 2)^{d - 1 + v} (q 
-1)^{s - d - 1 - v}
\end{split}
\end{equation*}
So we get 
\[
N(M_3) = \tilde{N}_1 - \tilde{N}_2 = (q - 3)(q - 2)^{d - 2}(q - 1)^{s - d - 
1 - v}\left( (q - 
1)^{v + 1} - (q - 2)^{v + 1}\right).
\] \\
iv) We have $M_4 = 
t_1  \cdots  \widehat{t_j}  \cdots  t_d t_{e_1}^{q - 2} t_{e_2}$, and in 
this 
case it is simple to check that $N(M_4) = (q - 2)^{d - 1}(q - 2) (q - 1)^{s - d 
- 
2} 
=   (q - 2)^{d}(q - 1)^{s - d - 2}$.
\end{proof}

We now want to determine which of the values in Lemma \ref{num_mon}  is the 
least one.
We rewrite $N(M_2)$ and $N(M_3)$ as $N(M_2, v)$ and $N(M_3, v)$ to make 
explicit the 
effect of the value of $v$ in these numbers.

\begin{proposition}\label{nm4-minimal}
We have 

\begin{equation*}
\begin{split}
N(M_4) \leq \min ( \{ N(M_1) \} &\cup \{N(M_2, v) \mid  2 \leq v \leq 
d\} \\
      &\cup \{ N(M_3, v) \mid  1 \leq v \leq d - 2\} )
\end{split}
\end{equation*}
while $N(M_3, 0) < N(M_4)$.
\end{proposition}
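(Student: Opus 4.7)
The plan is to express everything as a ratio with $N(M_4) = (q-2)^d(q-1)^{s-d-2}$, relying on the elementary identity $(q-1)^n - (q-2)^n = \sum_{k=0}^{n-1}(q-1)^k(q-2)^{n-1-k}$, valid because $(q-1)-(q-2)=1$. Applying this to the difference factors appearing in Lemma \ref{num_mon}(ii),(iii), I would rewrite
\[
N(M_2,v) = (q-2)^d\bigl((q-1)^{s-d-1} - (q-1)^{s-d-v}(q-2)^{v-1}\bigr),
\]
\[
N(M_3,v) = (q-3)(q-2)^{d-2}\bigl((q-1)^{s-d} - (q-1)^{s-d-1-v}(q-2)^{v+1}\bigr).
\]

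In each of these formulas the subtracted term has a fixed total exponent, so incrementing $v$ by one multiplies it by $(q-2)/(q-1) < 1$. Hence $N(M_2,v)$ and $N(M_3,v)$ are non-decreasing in $v$, and their minima on the stated ranges are attained at the smallest admissible values, $v=2$ and $v=1$ respectively. Direct substitution then gives $N(M_2,2) = (q-2)^d(q-1)^{s-d-2} = N(M_4)$, and $N(M_3,1) = (q-3)(q-2)^{d-2}(q-1)^{s-d-2}(2q-3)$. Thus $N(M_3,1) \geq N(M_4)$ reduces to $(q-3)(2q-3) \geq (q-2)^2$, i.e.\ $q^2-5q+5 \geq 0$, which holds for every $q \geq 4$ (the value is $1$ at $q=4$ and the parabola is increasing from there).

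The remaining bound $N(M_1) \geq N(M_4)$ is immediate from $N(M_1)/N(M_4) = (q-1)^2/(q-2) > 1$. For the strict inequality $N(M_3,0) < N(M_4)$, substituting $v=0$ into the rewritten formula yields $N(M_3,0) = (q-3)(q-2)^{d-2}(q-1)^{s-d-1}$, whence $N(M_4)/N(M_3,0) = (q-2)^2/\bigl((q-3)(q-1)\bigr)$; since $(q-2)^2 - (q-3)(q-1) = 1$ identically, this ratio is strictly greater than $1$. The only subtle point is to separate the two regimes of $N(M_3,v)$ — namely $v \geq 1$, where $N(M_4) \leq N(M_3,v)$, and $v = 0$, where the inequality reverses — since the proposition asserts both simultaneously. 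Otherwise the argument is pure algebra with no real obstacle.
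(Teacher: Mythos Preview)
Your proof is correct and follows essentially the same approach as the paper: establish that $N(M_2,v)$ and $N(M_3,v)$ are monotone in $v$, evaluate at the boundary values $v=2$, $v=1$, $v=0$, and compare directly with $N(M_4)$, arriving at the same key inequality $q^2-5q+5\geq 0$. The only cosmetic differences are that the paper argues monotonicity of $N(M_3,v)$ by pointing back to the term $\tilde N_2$ from the proof of Lemma~\ref{num_mon}, and computes the differences $N(M_4)-N(M_3,0)=(q-2)^{d-2}(q-1)^{s-d-2}$ and $N(M_3,1)-N(M_4)=(q-2)^{d-2}(q-1)^{s-d-2}(q^2-5q+5)$ explicitly, whereas you distribute the formulas first and then compare via ratios; your telescoping-sum identity is not actually used in the rewrite (simple distribution suffices), but this is harmless.
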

\begin{proof}
Clearly $N(M_4) < N(M_1)$. From the formula for $N(M_2, v)$ we see that its 
value 
increases as the value of $v$ increases, and the least value for $v$ is $2$, so 
the least value for $N(M_2, v)$ is $N(M_2, 2) = (q - 2)^d (q - 1)^{s - d - 2}$, 
which is equal to $N(M_4)$.

Let $\tilde{N}_2$ be as in the proof of Lemma \ref{num_mon}, clearly 
$\tilde{N}_2$ 
decreases as $v$ increases, which means that $N(M_3, v)$ increases as 
$v$ increases, so the least number for $N(M_3, v)$ is 
$N(M_3, 0) = (q - 3)(q - 2)^{d - 2}(q - 1)^{s - d - 1}$.
We have 
\[
N(M_4) - N(M_3, 0) =  (q - 2)^{d - 2}(q - 1)^{s - d - 2}
\]
so $N(M_3, 0) < N(M_4)$. On the other hand
\[
N(M_3, 1) - N(M_4) = (q - 2)^{d - 2}(q - 1)^{s - d -2}(q^2 - 5q +5)
\]
so (recalling that $q \geq 4$) we get that $N(M_4) < N(M_3, 1)$ and, a 
fortiori, $N(M_4) < N(M_3, v)$ for all $v \in \{1, \ldots, d-2\}$.
\end{proof}
\vspace{2ex}

%

Thus, the lowest bound for $\omega(\varphi(f + I_X))$, according to 
\eqref{lower-bound} is 
\[
\omega(\varphi(f + I_X)) \geq  (q - 2)^d (q - 1)^{s - d} + N(M_3,0),
\]
which could be attained only by a polynomial $f$ such that, for some $j \in 
\{1,\ldots,d\}$, has  $M_3$, with $v = 0$,
as the leading monomial of the remainder in the division of 
$S(t_j^{q - 1} - 1, f)$ by $\B$. 
Yet, a consequence of the next result is that there's no 
homogeneous square-free monic polynomial of degree $d$ with $\lm(f) = 
t_1  \cdots  t_d$ such that
$\omega(\varphi(f + I_X)) = (q - 2)^d (q - 1)^{s - d} + N(M_3,0)$.

\begin{remark}
Note that $M_3$, with $v = 0$, is of the form 
\[
M_3 = t_j^{q - 2} t_{e_1}^2t_{c_1}  \cdots  t_{c_{d - 2}},
\]
where $\{t_{c_1} , \ldots , t_{c_{d - 2}}\} \subset \{t_1, 
\ldots, t_d\}\setminus \{ t_j\}$ and $t_{e_1} \in \{t_{d + 1}, \ldots, t_s\}$. 
Equivalently 
\[
M_3 = t_j^{q - 2} t_{e_1}^2 t_1  \cdots  \hat{t}_j  \cdots  \hat{t}_\ell  
\cdots  t_d
\] 
for some $\ell \in \{1,\ldots, d\} \setminus \{ j \}$.
For the remainder in the division of  
$S(t_j^{q - 1} - 1, f)$ by $\B$
to have  such a monomial as leading monomial, 
according to the proof of Theorem \ref{rem-mon}, we must have: \\
a) $h_1 \neq 0$, and we write $h_1 = \sum_{i = 1}^{w} a_i t_{e_i}$, where 
$t_{e_1} \succ \cdots \succ t_{e_w}$ and $a_1, 
\ldots, a_w \in \fq^*$; \\
b) $t_{e_1} \mid \lm(h_{d - 1})$. \\  
\end{remark}


\begin{proposition}\label{m3}
Let $f$ be 
a homogeneous square-free monic polynomial of degree $d$ such that $\lm(f) = 
t_1  \cdots  
t_d$, and let $j \in \{1, \ldots, d\}$. If the remainder in the division of 
$S(t_j^{q - 1} - 1, f)$ by $\B = \{t_i^{q - 1} - 1 \mid i = 1, \ldots, s\} \cup \{ 
f\}$ has 
\[
M_3 = t_j^{q - 2} t_{e_1}^2 t_1  \cdots  \hat{t}_j  \cdots  \hat{t}_\ell  
\cdots  t_d
\]
as leading monomial, where $\ell \in \{1,\ldots, d\} \setminus \{ j \}$, then there exists $j' \in \{1, \ldots, d\} \setminus \{ j \}$ such that the remainder  in the division of 
$S(t_{j'}^{q - 1} - 1, f)$ by  $\B$ is not zero. Moreover, if $M$ is the 
leading 
monomial of this remainder, then the number of monomials  which are 
multiples of $M$ and are not multiples of $M_3$ or $t_1 \cdots  t_d$ is 
greater than $(q - 2)^{d - 2}(q - 1)^{s - d - 2}$.
\end{proposition}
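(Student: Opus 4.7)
The plan is to split the proof in two: first establish the existence of an index $j' \in \{1,\ldots,d\}\setminus\{j\}$ for which the remainder is nonzero, and then verify the counting estimate.

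For the existence, I argue by contradiction. Suppose every $j' \in \{1,\ldots,d\}\setminus\{j\}$ gives zero remainder. By Corollary \ref{rem-eq-zero}, each such $j'$ yields a linear factor $(t_{j'} + a_{j'} t_{c_{j'}})$ of $f$, with $a_{j'} \in \fq^*$ and $t_{c_{j'}} \in \{t_{d+1},\ldots,t_s\}$. Since these factors involve pairwise distinct variables $t_{j'} \in \{t_1,\ldots,t_d\}\setminus\{t_j\}$, their product $P := \prod_{j' \neq j}(t_{j'} + a_{j'} t_{c_{j'}})$ divides $f$, and a degree comparison yields $f = (t_j + L)\,P$ for some linear form $L$ in $t_{d+1},\ldots,t_s$. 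Since the remainder at $j$ having leading monomial of type $M_3$ requires $h_1 \neq 0$ (by the cases in Theorem \ref{rem-mon}), one must have $L \neq 0$. Expanding $f$ with respect to $t_j$ gives $h_1 = L$ and $h_{d-1}$ equal to the non-leading part of $P$, whose leading monomial is $t_{c_{j_{\max}}} \cdot (t_1 \cdots \widehat{t_j} \cdots \widehat{t_{j_{\max}}} \cdots t_d)$ with $j_{\max} := \max(\{1,\ldots,d\}\setminus\{j\})$. Matching this with the hypothesis $\lm(h_{d-1}) = t_{e_1} \cdot (t_1 \cdots \widehat{t_j} \cdots \widehat{t_\ell} \cdots t_d)$, in which $t_{e_1}$ is the largest variable of $h_1$ by the proof of Theorem \ref{rem-mon}, forces $\ell = j_{\max}$ and $t_{e_1} = t_{c_\ell}$; in particular $t_{c_\ell}$ appears in $L$. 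But then $t_{c_\ell}$ occurs in both the factor $(t_j + L)$ and the factor $(t_\ell + a_\ell t_{c_\ell})$ of $f$, and selecting it from both produces a monomial of $f$ divisible by $t_{c_\ell}^2$, contradicting the square-freeness of $f$.

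Having such a $j'$, let $\widetilde{M}$ denote the leading monomial of the remainder of $S(t_{j'}^{q-1} - 1, f)$ by $\B$; by Theorem \ref{rem-mon}, $\widetilde{M}$ is of one of the four forms $M_1^{(j')},\ldots,M_4^{(j')}$ (with $j'$ in place of $j$ in the descriptions at the beginning of Section 4). The count to bound below equals
\[
N(\widetilde{M}) \; - \; \bigl|\{ N \in \Delta(I_X) : \mathrm{lcm}(\widetilde{M}, M_3) \mid N,\; t_1\cdots t_d \nmid N\}\bigr|,
\]
with $N(\widetilde{M})$ computed in Lemma \ref{num_mon}. I would proceed by case analysis on the type of $\widetilde{M}$: in each case compute $\mathrm{lcm}(\widetilde{M}, M_3)$ variable by variable, noting that $M_3$ already contributes $t_j^{q-2}$, $t_{e_1}^2$, and $t_i$ for each $i \in \{1,\ldots,d\}\setminus\{\ell\}$, and then apply inclusion–exclusion. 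Whenever $\widetilde{M}$ also has $t_\ell$ at positive exponent, the lcm forces divisibility by $t_1\cdots t_d$, the subtraction vanishes, and the count equals $N(\widetilde{M})$, which by Proposition \ref{nm4-minimal} is at least $(q-3)(q-2)^{d-2}(q-1)^{s-d-1}$ and so exceeds the target $(q-2)^{d-2}(q-1)^{s-d-2}$ by a factor $(q-3)(q-1) \geq 3$ for $q \geq 4$.

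The main obstacle is the remaining scenarios, in which $\widetilde{M}$ leaves $t_\ell$ at exponent zero: these occur when $j' = \ell$ and $\widetilde{M}$ has type $M_1^{(\ell)}$ or $M_4^{(\ell)}$, or when $j' \neq \ell$ and $\widetilde{M}$ has type $M_2^{(j')}$ or $M_3^{(j')}$ with $t_\ell$ absent from its square-free part. In each such scenario the subtraction is nonzero and one performs a direct inclusion–exclusion computation. The tightest case is $\widetilde{M} = M_3^{(j')}$ with $v' = 0$, $j' \neq \ell$, missing index $\ell' = \ell$, and exponent-two variable $t_{e'} = t_{e_1}$; here the count evaluates to $(q-3)^2(q-2)^{d-3}(q-1)^{s-d-1}$, whose ratio to $(q-2)^{d-2}(q-1)^{s-d-2}$ equals $(q-3)^2(q-1)/(q-2)$ and is $\geq 3/2 > 1$ for all $q \geq 4$, completing the estimate.
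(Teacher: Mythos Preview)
Your argument follows the paper's closely: the existence of $j'$ is obtained by the same contradiction via Corollary~\ref{rem-eq-zero} (you compute $\lm(h_{d-1})$ explicitly, whereas the paper only uses that every variable from $\{t_{d+1},\ldots,t_s\}$ appearing in $h_{d-1}$ must be one of the $t_{c_i}$, hence cannot equal $t_{e_1}$ by square-freeness---but the conclusion is identical). The counting part is also the same inclusion--exclusion over the four possible types of $\widetilde{M}$; the paper writes it as $N = N_1 - N_2 - N_3 + N_4$, which agrees with your $N(\widetilde{M}) - |\{\cdot\}|$.

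One organisational improvement in your write-up is the observation that whenever $t_\ell \mid \widetilde{M}$ the lcm with $M_3$ is already divisible by $t_1\cdots t_d$, so the subtracted term vanishes and the count is just $N(\widetilde{M}) \geq N(M_3,0)$; the paper does not isolate this and instead runs the full inclusion--exclusion in every sub-case. There is also a numerical discrepancy worth noting: your claimed tightest configuration (type $M_3^{(j')}$ with $v'=0$, $j'\neq\ell$, missing index $\ell$, and $e'_1=e_1$) gives $(q-3)^2(q-2)^{d-3}(q-1)^{s-d-1}$, which is strictly smaller than the value $(q-3)(q-2)^{d-2}(q-1)^{s-d-1}$ that the paper reports as the minimum among the $M_3$ sub-cases. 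Both quantities exceed the required threshold $(q-2)^{d-2}(q-1)^{s-d-2}$, so the Proposition is unaffected, but it appears the paper's case split did not include your specific configuration. As in the paper, the remaining sub-cases are asserted rather than written out in full; for a complete proof you would still need to record each of those inclusion--exclusion computations.
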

\begin{proof}
Let $f$ be as in the statement, and assume that for all $j' \in \{1, \ldots, 
d\} \setminus \{ j \}$ the remainder  in the division of 
$S(t_{j'}^{q - 1} - 1, f)$ by  $\B$ is zero. From Corollary \ref{rem-eq-zero} 
we get that 
\[
f = \left(\prod_{\substack{i=1 \\ i\neq j}}^d (t_i + a_{c_i} t_{c_i}) \right) 
f_1
\]
where $t_{c_i} \in \{t_{d+1}, \ldots , t_s\}$ and $a_{c_i} \in \fq^*$ for all 
$i \in \{1, \ldots, d\}\setminus \{ j \}$, and $f_1$ is a homogeneous 
polynomial of degree one, with $\lm(f_1) = t_j$.

Writing $f$ as in Equation \eqref{f}, namely 
\begin{equation*}
f =  t_1  \cdots  t_d + t_j h_{d - 1} + t_1  \cdots  \widehat{t_j}  \cdots 
t_d h_1 + h_d 
\end{equation*}
and comparing with the above expression, we get that $f_1 = t_j + h_1$ and 
\[
\prod_{\substack{i=1 \\ i\neq j}}^d (t_i + a_{c_i} t_{c_i}) = t_1 \cdots  
\hat{t}_j  \cdots  t_d + h_{d - 1} .
\]
From the hypothesis and item (a) of the above Remark we  should have 
$h_1 = \sum_{i = 1}^{w} a_i t_{e_i}$. But then item (b) cannot hold because 
$t_{c_i} \neq t_{e_1}$ for all $i \in \{1, \ldots, d\}\setminus \{ j \}$, since 
$f$ is square-free. 

This proves that there exists $j' \in \{1, \ldots, d\} \setminus \{ j \}$ such 
that the remainder  in the division of 
$S(t_{j'}^{q - 1} - 1, f)$ by  $\B$ is not zero, and
let $M$ be the leading monomial of this remainder. From Theorem \ref{rem-mon} 
we know that $M$ is one of the four types written in a more detailed manner 
before Lemma 
\ref{num_mon}. 
To find the number $N$ of monomials which are multiples of $M$ and are not 
multiples of either $M_3$ or $t_1 \cdots  t_d$, we will determine the number 
$N_1$ of multiples of $M$, then the number $N_2$ of multiples of $M$ and $M_3$, 
then the number $N_3$ of multiples of $M$ and $t_1 \cdots  t_d$, and finally 
the number $N_4$ of common multiples of $M$, $M_3$ and $t_1 \cdots  t_d$. 
From the inclusion-exclusion principle we get that $N = N_1 - N_2 - N_3 + N_4$.
We do this for the four types of monomials $M$. The calculations are not 
difficult but are a bit lengthy, so we present them explicitly for the case 
when $M$ is of type $M_4$, and we will just state the results for the other 
types. 

If $M = t_1  \cdots  \widehat{t_{j'}}  \cdots  t_d t_{e'_1}^{q - 2} 
t_{e'_2}$,
where $t_{e'_1}$ and $t_{e'_2}$ are distinct monomials in the set $\{t_{d + 
1}, 
\ldots, t_s\}$, then $N_1 = (q - 2)^{d - 1}  (q - 2) (q - 1)^{s - d - 1} = (q - 
2)^d (q - 1)^{s - d - 1}$. The number $N_2$ of multiples of $M$ and 
\[
M_3 = t_j^{q - 2} t_{e_1}^2 t_1  \cdots  \hat{t}_j  \cdots  \hat{t}_\ell 
 \cdots  t_d
\]
is equal to the number of multiples of \\
$t_j^{q - 2} t_{e_1}^2 t_{e'_1}^{q - 2} t_{e'_2} t_1  \cdots  \hat{t}_j  
\cdots    t_d$ if $\ell \neq j'$,  $e_1 \neq e'_1$ and $e_1 \neq e'_2$; or 
\\
$t_j^{q - 2}  t_{e'_1}^{q - 2} t_{e'_2} t_1  \cdots   \hat{t}_j  \cdots 
t_d$ if $\ell \neq j'$ and  $e_1 = e'_1$; or \\
$t_j^{q - 2} t_{e'_1}^{q - 2} t_{e'_2}^2 t_1  \cdots  \hat{t}_j  \cdots 
  t_d$
if $\ell \neq j'$ and  $e_1 = e'_2$; or \\
$t_j^{q - 2} t_{e_1}^2 t_{e'_1}^{q - 2} t_{e'_2} t_1  \cdots  \hat{t}_j  
\cdots  \hat{t}_\ell  \cdots   t_d$ if $\ell = j'$,  $e_1 \neq e'_1$ and 
$e_1 \neq e'_2$; or \\
$t_j^{q - 2} t_{e'_1}^{q - 2} t_{e'_2} t_1  \cdots  \hat{t}_j  \cdots  
\hat{t}_\ell  \cdots   t_d$ if $\ell = j'$ and  $e_1 = e'_1$; or \\
$t_j^{q - 2} t_{e'_1}^{q - 2} t_{e'_2}^2 t_1  \cdots  \hat{t}_j  \cdots 
 \hat{t}_\ell  \cdots   t_d$ if $\ell = j'$ and $e_1 = e'_2$.\\
Thus the values of $N_2$ are, respectively, equal to $(q - 3)(q - 2)^d(q - 
1)^{s - d - 3}$, $(q - 2)^d(q - 1)^{s - d - 2}$, $(q - 3)(q - 2)^{d-1}(q - 
1)^{s - d - 2}$, $(q - 3)(q - 2)^{d - 1}(q - 1)^{s - d - 2}$, $(q - 2)^{d-1}(q 
- 1)^{s - d - 1}$ and $(q - 3)(q - 2)^{d - 
2}(q - 1)^{s - d - 1}$. The number $N_3$ of multiples of $M$ and $t_1 \cdots  
t_d$ is equal to the number of multiples of $t_1  \cdots  t_d t_{e'_1}^{q - 
2} t_{e'_2}$, so $N_3 = (q - 2)^{d + 1} (q - 1)^{s - d -2}$. Finally the 
number $N_4$ is equal to the number of multiples of \\
$t_j^{q - 2} t_{e_1}^2 t_{e'_1}^{q - 2} t_{e'_2} t_1  \cdots  \hat{t}_j  
\cdots    t_d$ if  $e_1 \neq e'_1$ 
and $e_1 \neq e'_2$, and in this case $N_4 = (q - 3)(q - 2)^d(q - 1)^{s - d - 
3}$; or \\
$t_j^{q - 2}  t_{e'_1}^{q - 2} t_{e'_2} t_1  \cdots   \hat{t}_j  \cdots 
t_d$ if $e_1 = e'_1$, and then $N_4 = (q - 2)^d(q - 1)^{s - d - 2}$; or \\
$t_j^{q - 2} t_{e'_1}^{q - 2} t_{e'_2}^2 t_1  \cdots  \hat{t}_j  \cdots 
  t_d$ if  $e_1 = e'_2$, and then $N_4 = (q - 3)(q - 2)^{d - 1}(q - 1)^{s - 
d - 2}$.

Now we want to determine $N = N_1 - N_2 - 
N_3 + N_4$, and note that, 
for our purposes, we want the least value of $N$.
Considering all the possible values for $N_2$ and $N_4$ determined above we get 
that this least value is attained when  $\ell = j'$ and  $e_1 = e'_1$, and is 
equal to $(q - 2)^{d - 1} (q - 1)^{s - d - 2}(q - 3)$, so 
when $M$ is of type $M_4$ we get 
\[
N \geq (q - 2)^{d - 1} (q - 1)^{s - d - 2}(q - 3) > (q - 2)^{d - 2} (q - 1)^{s 
- d - 2}.
\]

When $M$ is of type $M_1$, say $M = t_1  \cdots 
\widehat{t_{j'}}  \cdots t_d$, we get $N_1 = (q - 2)^{d - 1}(q - 1)^{s - d + 
1}$, $N_2 = (q - 3)(q - 2)^{d - 2}(q - 1)^{s - d}$ (the greatest of two 
possibilities), $N_3 = (q - 2)^d (q - 1)^{s - d}$ and $N_4 = (q - 3)(q - 2)^{d 
- 1}(q - 1)^{s - d - 1}$, so that $N = (q - 2)^{d - 2}(q - 1)^{s - d - 1}(q^2 - 
4 q + 5) >  (q - 2)^{d - 2}(q - 1)^{s - d - 2}$.  

Now assume that $M$ is of type $M_2$, say $M = t_{j'}^{q - 2} t_{d_1}  \cdots 
 t_{d_k}. t_{b_1}  \cdots  t_{b_v}$, 
where 
\[
\{t_{d_1} , \ldots , t_{d_k}\} \subset \{t_1, \ldots, t_d\}\setminus \{ 
t_{j'}\}, 
\;\; \{t_{b_1}, \ldots , t_{b_v}\} \subset \{t_{d + 1}, \ldots, 
t_s\},
\]
$k + v = d$, $0 \leq k \leq d - 2$ and $2 \leq v \leq d$.
We have $N_1 = (q - 2)^d (q - 1)^{s - d - 1}$. To determine $N_2$ we must 
consider two cases depending on whether or not $t_{e_1}$ belongs to $\{t_{b_1}, 
\ldots , t_{b_v}\}$, 
and each of these two cases subdivide into three other cases.
The expressions that we obtain for $N_2$ depend on $v$. 
Next we have $N_3 = (q - 2)^{d + v - 1}(q - 1)^{s - d - v}$.
For $N_4$ we have to consider the same six cases, as in the determination of 
$N_2$, and the expressions again depend on $v$.  We end up with six expressions 
for $N$, all depending on $v$, and all having their minimum when $v = 2$. Thus, 
taking $v = 2$ we get that the least possible value for $N$ in this case is 
$(q - 2)^{d - 2}(q - 1)^{s - d - 2}(q^2 - 5 q + 7)$, which is greater than 
$(q-2)^{d-2} (q - 1)^{s - d - 2}$.

To finish, we assume that $M$ is of type $M_3$, say 
\[
M = t_{j'}^{q - 2} t_{e'_1}^2t_{d_1}  \cdots  t_{d_k}.t_{b_1} \cdots  
t_{b_v},
\]
 where 
\[
\{t_{d_1} , \ldots , t_{d_k}\} \subset \{t_1, \ldots, t_d\}\setminus \{ 
t_{j'}\}, 
\;\; \{t_{e'_1}\} \cup  \{t_{b_1}, \ldots , t_{b_v}\} \subset \{t_{d + 1}, 
\ldots, 
t_s\},
\]
$t_{e'_1}$ distinct from $t_{b_1}, \cdots , t_{b_v}$, with $k + v = d - 2$, 
$0 \leq k \leq d - 2$ and $0 \leq v \leq d - 2$. We have $N_1 = (q - 3)(q - 
2)^{d - 2}(q - 1)^{s - d}$. For $N_2$, as before, we must consider two cases,
namely   $t_{e_1} \in \{t_{b_1}, \ldots, t_{b_v}\}$, which will branch into 
three 
subcases, and 
$t_{e_1} \notin \{t_{b_1}, \ldots, t_{b_v}\}$, which will branch into six 
subcases. The expressions that we obtain for $N_2$, in each case, depend on 
$v$. 
For $N_3$ we 
have $N_3 = (q - 3)(q - 2)^{d + v - 1} (q - 1)^{s - d - v - 1}$. For $N_4$ 
we must consider the same nine 
cases which appeared in the determination of $N_2$, and again we find 
expressions that depend on $v$. Then we consider the nine expressions that we 
have for $N$, and we observe that each of them 
attain their 
minimum when $v$ has its minimum value: zero, in the cases where 
$t_{e_1} \notin \{t_{b_1}, \ldots, t_{b_v}\}$, and 1 in the cases where
$t_{e_1} \in \{t_{b_1}, \ldots, t_{b_v}\}$. Substituting $v$ by its minimum 
value, in each case, we get that the least value for $N$, when $M$ is 
of type $M_3$, is $N = (q - 3)(q - 2)^{d - 2}(q - 1)^{s - d - 1}$, which is 
greater than $(q - 2)^{d - 2}(q - 1)^{s - d - 2}$.
\end{proof}

Before the above result we had the bound 
\[
\omega(\varphi(f + I_X)) \geq  (q - 2)^d (q - 1)^{s - d} + N(M_3,0),
\]
which, according to Proposition  \ref{nm4-minimal}, could only be attained  
by a polynomial $f$ 
such that, for some $j \in 
\{1,\ldots,d\}$, has  $M_3$, with $v = 0$,
as leading monomial of the remainder in the division of 
$S(t_j^{q - 1} - 1, f)$ by $\B$. Let $g_j$ be this remainder. By the above 
result, there exists $j' \neq j$, $j' \in \{1, \ldots, d\}$ such that the  
remainder in the division of 
$S(t_{j'}^{q - 1} - 1, f)$ by $\B$, which we call $g_{j'}$, is not zero. 
Thus 
\[
I_X + (f) = (t_1^{q -1} - 1, \ldots, t_s^{q - 1} - 1, f, g_j, g_{j'}).
\]
From the definition of footprint, we get that a monomial $M \in \Delta(I_X 
+ (f))$ is not a multiple of any of the monomials in the set 
\[
\{t_1^{q - 1}, 
\ldots, t_s^{q - 1}, \lm(f), \lm(g_j), \lm(g_{j'}) \}.
\]
Thus, reasoning as we did before inequality \ref{bound-2nd}, we get that for 
such $f$ we have
\begin{equation*}
 \begin{split}  
\omega(\varphi(f + I_X)) &\geq 
| \Delta(I_X) | - | \Delta(I_X + (f))| \\
&\geq  (q - 2)^d (q - 1)^{s - d} \\
&+ |\{M \in \Delta(I_X) \mid  M 
\textrm{ is a multiple of } \lm(g_j) \\
&\textrm{\hspace{15ex}} \textrm{ and not a multiple of } \lm(f) \}| \\
&+ |\{M \in \Delta(I_X) \mid  M 
\textrm{ is a multiple of } \lm(g_{j'}) \\
&\textrm{\hspace{15ex}} \textrm{ and not a multiple of } \lm(f) \textrm{ or } 
\lm(g_j) 
\}| 
\end{split}
\end{equation*}
and from the above Proposition we get the strict inequality
\[
\omega(\varphi(f + I_X)) > (q - 2)^d (q - 1)^{s - d} + N(M_3,0) + 
(q - 2)^{d - 2}(q - 1)^{s - d - 2}.
\]
From the proof of Proposition \ref{nm4-minimal} we know that
$N(M_4) - N(M_3,0) = (q - 2)^{d - 2}(q - 1)^{s - d - 2}$, so the above 
inequality 
may be rewritten as
  $\omega(\varphi(f + I_X)) > (q - 2)^d (q - 1)^{s 
- d} + 
N(M_4)$. 

This shows that the lowest bound for the next-to-minimal weight is 
$(q - 2)^d (q - 1)^{s - d} + N(M_4)$ (which can be only be attained by a 
homogeneous square-free monic polynomial of degree $d$ with $\lm(f) = 
t_1  \cdots  t_d$ such that, for some $j \in 
\{1,\ldots,d\}$, has either $M_4$ or $M_2$, with $v = 2$,
as the leading monomial of the remainder in the division of 
$S(t_j^{q - 1} - 1, f)$ by $\B$). The next result shows that this bound is 
attained.

\begin{theorem} \label{main}
Assume that $2d + 2 \leq s$. Then the next-to-minimal weight of the code $C(d)$ 
is $(q - 2)^d (q - 1)^{s - d} + (q - 2)^d(q - 1)^{s - d - 2}$.
\end{theorem}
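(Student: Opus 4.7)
The plan is to combine the lower bound already assembled in the discussion preceding this theorem with an explicit construction of a codeword of the claimed weight.

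First, I would observe that for any $f \in \L(d)$ whose image in $\C(d)$ is not a minimum-weight codeword, the preceding analysis already yields
\[
\omega(\varphi(f+I_X)) \geq (q-2)^d(q-1)^{s-d} + (q-2)^d(q-1)^{s-d-2}.
\]
Indeed, such an $f$ is not of the factored form of Proposition \ref{min-word}, so Corollary \ref{rem-eq-zero} guarantees the existence of some $j$ for which $S(t_j^{q-1}-1, f)$ has a nonzero remainder $z$. Proposition \ref{nm4-minimal} gives $N(\lm(z)) \geq N(M_4) = (q-2)^d(q-1)^{s-d-2}$ in every case except $\lm(z) = M_3$ with $v=0$, and Proposition \ref{m3} handles the latter by producing an additional leading monomial that forces a strict improvement past the same bound.

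Next, to show the lower bound is tight, I would exhibit an explicit polynomial attaining it. My candidate is
\[
f^* = \prod_{i=1}^{d-1}(t_i + a_i t_{c_i}) \cdot (t_d + b_1 t_{e_1} + b_2 t_{e_2} + b_3 t_{e_3}),
\]
where $a_1,\ldots,a_{d-1},b_1,b_2,b_3 \in \fq^*$ and $c_1,\ldots,c_{d-1},e_1,e_2,e_3$ are pairwise distinct elements of $\{d+1,\ldots,s\}$. Assembling these $2d+2$ distinct variables is accommodated precisely by the hypothesis $s \geq 2d+2$. Every variable appears in at most one factor of $f^*$, so expanding shows $f^* \in \L(d)$ with $\lm(f^*) = t_1\cdots t_d$.

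Finally, I would compute $\omega(\varphi(f^*+I_X))$ by counting the points of $(\fq^*)^s$ where $f^*$ does not vanish. Because the factors of $f^*$ involve pairwise disjoint variable sets, $f^*(P) \neq 0$ if and only if every factor is nonzero at $P$, so the counts multiply. A short count shows that each form $t_i+a_i t_{c_i}$ contributes $(q-1)(q-2)$ nonzero points of $(\fq^*)^2$; fixing three of the four variables in turn and requiring the solved value to be nonzero shows that $t_d+b_1 t_{e_1}+b_2 t_{e_2}+b_3 t_{e_3}$ contributes $(q-1)(q-2)(q^2-2q+2)$ nonzero points of $(\fq^*)^4$. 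Multiplying with $(q-1)^{s-2d-2}$ for the unused variables and using $(q-1)^2+1 = q^2-2q+2$ yields weight $(q-2)^d(q-1)^{s-d-2}(q^2-2q+2) = (q-2)^d(q-1)^{s-d} + (q-2)^d(q-1)^{s-d-2}$, matching the lower bound and completing the proof.

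The main obstacle is identifying the right construction: the more natural perturbations of a minimum-weight codeword (for instance, replacing the last factor by a three-variable form $t_d+b_1 t_{e_1}+b_2 t_{e_2}$) produce codewords whose excess over the minimum distance is off by a factor of $(q-1)/(q-2)$, and it is the arithmetic identity $(q-2)(q^2-2q+2) = q^3-4q^2+6q-4$ applied to a four-variable linear form that pins down the need for exactly three auxiliary terms in the last factor.
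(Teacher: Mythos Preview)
Your proposal is correct and follows essentially the same route as the paper: invoke the lower bound already established in the preceding discussion, then exhibit a polynomial that is a product of $d-1$ two-variable linear forms and one four-variable linear form in pairwise disjoint variable sets, and count nonzero points factor by factor. The paper uses the specific polynomial $\big(\prod_{i=1}^{d-1}(t_i - t_{d-1+i})\big)(t_{2d-1}-t_{2d}+t_{2d+1}-t_{2d+2})$ and carries out the four-variable count by splitting on whether $a_{2d-1}-a_{2d}=0$, arriving at $(q-2)(q-1)^3+(q-2)(q-1)$, which equals your $(q-1)(q-2)(q^2-2q+2)$.
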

\begin{proof}
We know that a lower bound for 
the next-to-minimal weight of a codeword is  $\delta := (q - 2)^d (q - 1)^{s - 
d} + (q - 2)^d(q - 1)^{s - d - 2}$. 
Let 
\[
f := \left( \prod_{i = 1}^{d - 1}(t_i - t_{d - 1 + i }) \right) (t_{2d -1} - 
t_{2d} + 
t_{2d + 1} - t_{2d + 2}), 
\]
we claim that $\varphi(f + I_X)$ has weight equal to 
$\delta$. 

We start by noting that for $i \in \{1, \ldots, d - 1\}$ the number of pairs 
$(a_i, a_{d -1 + i}) \in (\fq^*)^2$ such that $a_i - a_{d - 1 + i} \neq 0$ is 
equal 
to 
$(q - 1)^2 - (q - 1) = (q - 2)(q - 1)$. Now we will count the number of 
$4$-tuples $(a_d, a_{2d}, a_{2d + 1}, a_{2d + 2}) \in (\fq^*)^4$
such that $a_{2d - 1} - a_{2d} + a_{2d + 1} - a_{2d + 2} \neq 0$. 
Observe that for every $a \in \fq^*$ the number of pairs $(a_{2d - 1}, a_{2d}) 
\in 
(\fq^*)^2$ such that $a_{2d - 1} - a_{2d} = a$ is equal to $q - 2$, and the 
number of 
pairs $(a_{2d+1}, a_{2d+2}) \in (\fq^*)^2$ such that $a_{2d + 1} - a_{2d + 2} 
\neq -a$ is $(q - 1)^2 - (q - 2)$. Thus, for each $a \in \fq^*$ we have $(q - 
2)((q - 1)^2 - (q - 2))$ $4$-tuples 
such that $a_{2d - 1} - a_{2d} + a_{2d + 1} - a_{2d + 2} \neq 0$, with $a_{2d - 
1} - a_{2d} = 
a$, which gives us $(q - 2)(q -1)((q - 1)^2 - (q - 2))$ $4$-tuples such that
$a_{2d - 1} - a_{2d} + a_{2d + 1} - a_{2d + 2} \neq 0$ and $a_{2d - 1} - a_{2d} 
\neq 0$.
On the other hand, we have $q-1$ pairs 
$(a_{2d - 1}, a_{2d}) \in (\fq^*)^2$ such that $a_{2d - 1} - a_{2d} = 0$ and 
$(q-2)(q-1)$ 
pairs $(a_{2d+1}, a_{2d+2}) \in (\fq^*)^2$ such that
$a_{2d+1} - a_{2d+2} \neq 0$, 
so the total number of $4$-tuples $(a_{2d - 1}, a_{2d}, a_{2d + 1}, a_{2d + 2}) 
\in 
(\fq^*)^4$
such that $a_{2d - 1} - a_{2d} + a_{2d + 1} - a_{2d + 2} \neq 0$ is
\[
 (q - 2)(q -1)((q - 1)^2 - (q - 2)) + (q - 2)(q - 1)^2 = (q - 2)(q - 1)^3 + (q 
 - 2)(q - 1).
\]

Hence the number of $(2d + 2)$-tuples $(a_1, \ldots, a_{2 d + 2}) \in 
(\fq^*)^{2 d + 2}$ such that 
\begin{equation}\label{notzero}
\left( \prod_{i = 1}^{d - 1}(a_i - a_{d - 1+ i}) \right) (a_{2d - 1} - a_{2d} + 
a_{2d + 1} - a_{2d + 2}) \neq 0 
\end{equation}
is $((q-2)(q - 1))^{d - 1}((q - 2)(q - 1)^3 + (q - 2)(q - 1)) = (q - 2)^d(q - 
1)^{d + 2} + (q - 2)^d (q - 1)^d$.

We conclude that the number of $s$-tuples $(a_1, \ldots, a_s) \in (\fq^*)^s$ 
such that inequality \eqref{notzero} holds is 
\begin{equation*} 
\begin{split}
(q - 1)^{s - 2d -2}&((q - 2)^d(q - 1)^{d + 2} + (q - 2)^d (q - 1)^d) = \\
 &(q - 2)^d (q - 1)^{s - d} + (q - 2)^d(q - 1)^{s - d - 2}.
\end{split}
\end{equation*} 
\end{proof}

\begin{corollary}
Assume that $2 d - 2 \geq s$. Then the next-to-minimal weight of 
the code $C(d)$ 
is $(q - 2)^{s - d} (q - 1)^{d} + (q - 2)^{s - d}(q - 1)^{d - 2}$.
\end{corollary}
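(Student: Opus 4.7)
The plan is to reduce the statement to Theorem \ref{main} by constructing a weight-preserving isomorphism between $\C(d)$ and $\C(s-d)$. Under the hypothesis $2d - 2 \geq s$, the complementary degree $d' := s - d$ satisfies $2d' + 2 \leq s$, which is exactly the hypothesis of Theorem \ref{main}; substituting $d'$ for $d$ in its formula recovers the claimed expression $(q-2)^{s-d}(q-1)^d + (q-2)^{s-d}(q-1)^{d-2}$. So it suffices to show that $\C(d)$ and $\C(s-d)$ share the same weight distribution.

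To accomplish this, I would use the complement map on squarefree monomials: to $M = t_{i_1} \cdots t_{i_d}$ associate the squarefree monomial $M^c := \prod_{j \notin \{i_1,\ldots,i_d\}} t_j$ of degree $s - d$. Extending $\fq$-linearly gives an isomorphism $\Psi : \L(d) \to \L(s-d)$ of $\fq$-vector spaces. The crucial identity is $M(P)\, M^c(P) = T(P)$ for every $P \in (\fq^*)^s$, where $T := t_1 \cdots t_s$. Introducing the bijection $\sigma : (\fq^*)^s \to (\fq^*)^s$ given by $\sigma(a_1,\ldots,a_s) = (a_1^{-1},\ldots,a_s^{-1})$, this identity rewrites as $M^c(P) = T(P) \cdot M(\sigma(P))$, and by linearity one obtains $\Psi(f)(P) = T(P) \cdot f(\sigma(P))$ for every $f \in \L(d)$.

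Since $T(P) \in \fq^*$ for every $P \in X$ and $\sigma$ is a bijection of $X$, the polynomials $f$ and $\Psi(f)$ vanish on the same number of points of $X$, hence $\omega(\varphi(f + I_X)) = \omega(\varphi(\Psi(f) + I_X))$. Therefore $\C(d)$ and $\C(s-d)$ have identical weight distributions, and applying Theorem \ref{main} to $\C(s-d)$ yields the corollary. The step requiring the most care is the verification of the formula $\Psi(f)(P) = T(P) \cdot f(\sigma(P))$; I would establish it first on monomials, using the multiplicative relation $M \cdot M^c = T$ together with the observation $M(\sigma(P)) = M(P)^{-1}$, and then extend $\fq$-linearly to all of $\L(d)$. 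Beyond this small verification, no serious obstacle is expected, as the argument is essentially a symmetry reduction rather than a new computation.
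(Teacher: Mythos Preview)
Your approach is correct and essentially identical to the paper's: both reduce to Theorem \ref{main} via a weight-preserving isomorphism $\C(d) \cong \C(s-d)$, after checking that $2d - 2 \geq s$ implies $2(s-d) + 2 \leq s$. The only difference is that the paper simply cites this isomorphism from \cite{evalcodes} (proof of Theorem~4.5), whereas you construct it explicitly via the complement map on squarefree monomials combined with coordinate inversion.
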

\begin{proof}
This is a consequence of the isomorphism between $C(d)$ and $C(s - d)$ which 
holds for $d$ such that $2 d \leq s$ (see the proof of Theorem 4.5 in 
\cite{evalcodes}). From $2d  - 2 \geq s$ we get $2d + s \geq 2 s + 2$ so that 
$2(s - d) + 2 \leq s$. Thus we may apply Theorem \ref{main} and get that the 
next-to-minimal weight of $C(s -d)$ (and hence, of $C(d)$) is 
$(q - 2)^{s - d} (q - 1)^{d} + (q - 2)^{s - d}(q - 1)^{d - 2}$.
\end{proof}

\begin{remark}
As a consequence of the isomorphism cited in the above proof and the form of 
the polynomial $f$ which appears in the proof of Theorem \ref{main}, we get 
that, when $2 d - 2 \geq s$, the polynomial 
\begin{equation*} 
\begin{split}
g = &\left( \prod_{i = 1}^{s - d - 1} (t_i - t_{s - d -1 + i})\right)(t_{2s - 
2d} 
t_{2s - 2d + 1}t_{2s - 2d + 2} - t_{2s - 2d - 1}t_{2s - 2d + 1} t_{2s - 2d + 2} 
\\ 
&+ t_{2s - 2d - 1}t_{2s - 2d} t_{2s - 2d + 2} - t_{2s - 2d - 1}t_{2s - 2d} 
t_{2s 
- 2d + 1}) t_{2s - 2d + 3}  \cdots  t_s
\end{split}
\end{equation*}
has degree $d$ and $\omega(\varphi(g + I_X)) = (q - 2)^{s - d} (q - 1)^{d} + (q 
- 2)^{s - d}(q - 1)^{d - 2}$ (in the formula of $g$ the product 
$t_{2s - 2d + 3}  \cdots  t_s$ only appears if $2 d - 2 > s$).
\end{remark}

\begin{remark}
A couple months after we submitted this paper we found that, for $q \geq 5$,  
the value of the
next-to-minimal weight  is  
$(q - 2)^{d - 1} (q - 1)^d ((q - 2)^2 + (q - 1))$, 
when $s = 2d + 1$, 
and is    
$(q - 2)^{d - 2} (q - 1)^{d-1} ((q - 2)^2 + (q - 1))$ 
 when $s = 2d - 1$. We are still working on the
case where $s = 2d$.
\end{remark}

\vspace{2ex}
\noindent
{\small
\textbf{Acknowledgments.} We started this paper during the International 
Conference 
on Algebraic Geometry, Coding Theory and Combinatorics, held at IIT-Hyderabad, 
India in December of 2023. We want to thank the organizers of this meeting for 
the nice scientific environment they created, and for partially supporting our 
participation in it. We also thank Prof.\ \'{E}rika Lopes for enlightening 
conversations on the combinatorics of this paper. 

\noindent
C. Carvalho  was partially supported by Fapemig APQ-01430-24 and CNPq PQ 
308708/2023-7 \\
N. Patanker was partially supported by an IoE-IISc Postdoctoral fellowship.}

\end{document}